\documentclass[final]{IEEEtran} 
\usepackage{fancyhdr}
\usepackage{epsfig}
\usepackage{threeparttable}
\usepackage{epsf,epsfig}
\usepackage{amsthm}
\usepackage{amsmath}
\usepackage{amssymb}
\usepackage{amsfonts}
\usepackage[noadjust]{cite}
\usepackage{dsfont}
\usepackage{subfigure}
\usepackage{color}

%


\newtheorem{theorem}{Theorem}

\newtheorem{lemma}{Lemma}
\newtheorem{corollary}{Corollary}

\newtheorem{proposition}{Proposition}

\newtheorem{assumption}{Assumption}

\newtheorem{remark}{Remark}

\def\E{\mathsf{E}}

\def\phi{\varphi}

\def\SINR{\mathsf{SINR}}

\def\l{\left}
\def\r{\right}
\def\({\left(}
\def\){\right)}

\setcounter{page}{1}



\def\b0{{\mathbf{0}}}


\def\bP{{\mathbf{P}}}






\newcommand{\Pout}{P_{\mathsf{out}}}

\newcommand{\nn}{\nonumber}

\begin{document}
\title{ Spatial Throughput of  Mobile Ad Hoc  Networks  Powered by Energy Harvesting}
\author{Kaibin Huang \thanks{K. Huang is with the Hong Kong Polytechnic University, Hong Kong.  Email: huangkb@ieee.org.  This paper has been presented in part at Asilomar Conf. on Signals, Systems, and Computers 2011 and at IEEE Intl. Conf. on Communications (ICC) 2013.  Copyright \textcopyright \ 2012 IEEE. Personal use of this material is permitted. However, permission to use this material for any other purposes must be obtained from the IEEE by sending a request to pubs-permissions@ieee.org.}}

\maketitle
\begin{abstract}
Designing  mobiles  to harvest   ambient energy  such as kinetic activities or electromagnetic radiation will   enable wireless   networks to  be self sustaining. In this paper,  the spatial throughput of a mobile ad hoc network  powered by energy harvesting  is analyzed  using a stochastic-geometry  model. In this model, transmitters are distributed  as a Poisson point process and energy arrives at each transmitter randomly with a uniform average rate called the \emph{energy arrival rate}.  Upon harvesting sufficient energy, each transmitter transmits with fixed power to an intended receiver   under   an outage-probability constraint for a target signal-to-interference-and-noise ratio. It is  assumed that transmitters  store energy in batteries with infinite  capacity. By  applying the random-walk theory, the probability that a transmitter transmits,  called the \emph{transmission probability}, is proved  to be equal to the smaller of one and the ratio between the energy-arrival rate and   transmission power.  This result  and  tools from stochastic geometry  are applied to maximize   the network throughput for a given energy-arrival rate  by optimizing transmission power. The  maximum network throughput is shown to be proportional to the optimal transmission probability, which is equal to one  if the transmitter density is below a derived function of the energy-arrival rate or  otherwise is  smaller than one and solves a given polynomial equation. 
Last, the limits of the maximum network throughput are obtained  for  the extreme cases of  high energy-arrival rates and sparse/dense networks. 

\end{abstract}

\begin{keywords}
Energy harvesting, mobile ad hoc networks,  throughput, power control, stochastic processes, mobile communication
\end{keywords}

\section{Introduction}
Recent years have seen increasing  popularity of mobile devices such as sensors and smart phones, giving rise to  two design issues among others. First, the power consumption of mobile-device   networks makes an escalating contribution to global warming. Second, conventional batteries that power mobile devices periodically interrupt their operation due to finite battery lives; battery recharging or replacement is inconvenient or even impossible in certain cases.  These  issues provide  strong motivation for powering mobile devices by harvesting ambient energy such as solar energy, vibration, kinetic activities  and  electromagnetic radiation \cite{Paradiso:EnergyScavengMobile}.  The capacity of mobile-device  networks powered by energy harvesting remains largely unknown, which is addressed in this paper. 

This  paper considers a mobile ad hoc  network (MANET) where  transmitters are modeled as a homogeneous Poisson point process (PPP). Energy arrives randomly at a transmitter with a fixed average rate, called the \emph{energy-arrival rate}. The energy-arrival process is modeled as an  independent and identically distributed (i.i.d.) sequence of random variables and different processes are assumed independent.  Each transmitter deploys an energy harvester that stores arriving  energy in a rechargeable battery.  Upon harvesting sufficient energy, a transmitter   transmits with  fixed power to an intended receiver under an outage-probability constraint for a target signal-to-interference-and-noise ratio (SINR). Based on the above  model,  the network spatial throughput is maximized by optimizing  transmission power for a given energy-arrival rate.  

\subsection{Prior Work and Motivation}

The fluctuation in harvested energy due to random energy arrivals requires redesigning existing transmission algorithms for wireless  communication systems. 
Assuming infinitely backlogged data,  existing work focuses on adapting transmission power to channel states and the temporal profile of energy arrivals  to maximize the system throughput \cite{HoZhang:EnergyAllocatnEnergyHarvestConstraints, OzelUlukus:TransEnergyHarvestFading:OptimalPolicies:2011, YangUlukus:BroadcastEnergyHarvest, Zhang:MIMOBCWirelessInfoPowerTransfer}.  
For single-user systems,  the optimal power-control policies are shown to be variations of the classic water-filling policy such that the causality of energy arrivals and finite battery capacity are accounted for \cite{HoZhang:EnergyAllocatnEnergyHarvestConstraints, OzelUlukus:TransEnergyHarvestFading:OptimalPolicies:2011}.  Adaptive transmission for broadcast channels with energy harvesting has been also investigated   
\cite{YangUlukus:BroadcastEnergyHarvest, Zhang:MIMOBCWirelessInfoPowerTransfer}. In \cite{YangUlukus:BroadcastEnergyHarvest}, the optimal power-control for a two-user single-antenna broadcast channel is shown to attempt to allocate a fix amount of harvested energy  to the user with the better channel before giving the remaining energy to the other user. A two-user multiple-input-multiple-output  broadcast channel is considered in \cite{Zhang:MIMOBCWirelessInfoPowerTransfer} where one user receives data and the other scavenges transmission energy, and  the precoder at the base station is designed to optimize the  tradeoff between the data rate and the rate of  harvested energy. 

{ 
In wireless communication systems with both bursty data-and-energy arrivals, buffering energy and data creates two corresponding queues at each transmitter. Jointly controlling these two coupled  queues  is more challenging than controlling only the data queue in traditional systems with reliable power supplies \cite{GeorgNeelyBook}. The algorithms  for optimally  controlling the energy-and-data queues have  been proposed  for single-user systems \cite{YangUlukus:PacketScheduleBroadcastChannelEnergyHarvest} and downlink systems  \cite{Uysal:PacketScheduleEnergyHarvestBC:2011} to minimize the packet transmission delay,  for interference channels to minimize queueing delay \cite{Lau:DelayOptimInterfNet:2012}, and for downlink systems to maximize the system throughput \cite{Gatzianas:ControlNetworkRechargeableBatt:2010}. These algorithms  
share a common objective of optimizing a particular performance metric for given average harvested power. The objective is aligned with that for designing the traditional energy-efficient systems with only data queues, namely minimizing the average transmission power under a performance constraint such as  fixed packet-transmission  delay  for single-user systems \cite{Uysal:EnergyEfficientPacketTX:2002, Neely:EnergyEfficientDelayConstraints:2008}, allowed queueing delay for downlink systems  \cite{Neely:EnergyDelayTradeoffMultiuserDL:2007}, and given traffic in wireless networks \cite{Neely:EnergyControlTimeVarNet:2006}. }

Wireless networks with energy harvesting have been studied  \cite{Simeone:MACSensorNetEnergyHarvest:2012,  Kansal:EnergyHarvestingSensorNets:2006, Ephremides:StabilityMACEnergyHarvesting:2011}.  For a wireless  sensor network with energy harvesting   and based on a simple channel model that omits channel noise and path loss,  the probability that a sensor  successfully transmits a data packet  to a fusion center is analyzed  in \cite{Simeone:MACSensorNetEnergyHarvest:2012} for different multiple-access protocols including time-division multiple access and Aloha like random access. Managing traffic load  in time and space is important for wireless sensor networks to be self sustaining through energy harvesting. Therefore, distributive strategies are proposed in  \cite{Kansal:EnergyHarvestingSensorNets:2006} for adapting traffic load to the spatial-and-temporal energy profile and evaluated using a network prototype. For a two-user interference network with energy harvesting, the data-and-energy arrivals are modeled as Bernoulli  processes and the stability region is characterized  such that it comprises all data-rate pairs under the constraint of finite data-queue lengths \cite{Ephremides:StabilityMACEnergyHarvesting:2011}. In view of prior work, there are few results that quantify   the  tradeoff  between  the network throughput and the energy-arrival rate though  such results  specify   the fundamental limit of the  network performance. This   tradeoff  is  investigated in the sequel using a stochastic-geometry approach. 

Stochastic geometry  provides a set of  powerful mathematical tools for modeling and designing wireless networks \cite{HaenggiAndrews:StochasticGeometryRandomGraphWirelessNetworks}.  
MANETs based on random access and carrier-sensing multiple access have been modeled using the PPPs \cite{Baccelli:AlohaProtocolMultihopMANET:2006, WeberAndrews:TransCapWlssAdHocNetwkOutage:2005} and Matern hard-core processes \cite{Baccelli:StochasticGeometryWLAN:2007}, respectively. 
Cellular networks have been shown to be suitably modeled using the Poisson Voronoi tessellation \cite{Andrews:TractableApproachCoverageCellular:2010}. Models of coexisting networks can be constructed by superimposing multiple point processes \cite{Dhillon:HeterNetModelAnalysis:2012,Huang:SpectrumSharCellularAdHocNetTxCapTradeoff:2008}. Stochastic-geometry models of wireless networks have been employed to quantify the network-performance gains due to   physical-layer techniques such as opportunistic transmission
\cite{WeberAndrews:TransCapAdHocNetwkDistSch:2006}, bandwidth partitioning \cite{JindalAndrews:BandwidthPartitioning:2007}, successive interference cancellation \cite{WeberAndrews:TransCapWlssAdHocNetwkSIC:2005}, and multi-antenna techniques \cite{AndrewJeff:CapacityScalingSpatialDiversity:2006,  VazeHeath:TransCapacityMultipleAntennaAdHocNetwork, Jindal:RethinkMIMONetwork:LinearThroughput:2008, LouieMacKay:SpatialMultiplexDiversityAdHocNetwork, Huang:SpatialInterfCancel:2012}. The performance metric typically considered in the literature is the network spatial throughput under a constraint on the outage probability for a target  SINR, which is also adopted in this paper. Using this metric, most prior work focuses on  deriving  the  outage probability using techniques   such as the Laplace transform \cite{Baccelli:AlohaProtocolMultihopMANET:2006, Andrews:TractableApproachCoverageCellular:2010} and probabilistic inequalities \cite{WeberAndrews:TransCapWlssAdHocNetwkOutage:2005, WeberAndrews:TransCapAdHocNetwkDistSch:2006}. 
This paper considers a MANET with Poisson distributed transmitters similar to the existing literature (see e.g., \cite{WeberAndrews:TransCapWlssAdHocNetwkOutage:2005}). However, the transmitters in the current network model are powered by energy harvesting  instead of reliable power supplies as in prior work. The consideration of energy harvesting  introduces several new design issues including the aforementioned tradeoff  between the  network throughput and energy-arrival rate, the corresponding optimization of transmission power, and the effect of finite energy storage, which are investigated  in the sequel. 

\subsection{Contributions and Organization}

For exposition, a few definitions and notations are provided as follows. Time is slotted. Define the transmission probability $\rho$ as the probability that a transmitter transmits and the   \emph{network interference temperature} as the maximum active transmitter density under the outage-probability constraint. Let $\lambda_0$ denote the transmitter density,    $\lambda_e$  the energy-arrival rate,  $P$ the transmission power, and $Z$  a nonnegative random variable representing the amount of energy harvested  by a typical  harvester in an arbitrary   slot.\footnote{A typical point  is  selected from a spatial point process by uniform sampling.} { Note  that $\E[Z] = \lambda_e$ and the density of active transmitters is equal to  $\rho \lambda_0$.}

The  main contributions of this paper are summarized as follows. 

\begin{enumerate}
\item  Assume  infinite battery capacity. Using  the law of large numbers and  random-walk theory, it is  proved  that $\rho$ is equal to the smaller 
of $\lambda_e/P$ and one. { It is worth mentioning that the tractable analysis relies on assuming the sub-optimal fixed-power transmission.  To the best of the author's knowledge, the aforementioned result is unknown from existing work that mostly focuses on  designing the optimal adaptive-transmission algorithms \cite{HoZhang:EnergyAllocatnEnergyHarvestConstraints, OzelUlukus:TransEnergyHarvestFading:OptimalPolicies:2011, YangUlukus:BroadcastEnergyHarvest, Zhang:MIMOBCWirelessInfoPowerTransfer, YangUlukus:PacketScheduleBroadcastChannelEnergyHarvest, Uysal:PacketScheduleEnergyHarvestBC:2011, Lau:DelayOptimInterfNet:2012, Gatzianas:ControlNetworkRechargeableBatt:2010}.}

\item Consider the case of finite battery capacity.    Bounds on $\rho$ are derived, which converge to the results stated above as the battery capacity increases. Moreover, two special cases are considered. If $Z$ is bounded and no larger than $P$, it is shown that   $\rho$ is equal to $\lambda_e/P$  so long as the battery capacity is larger than $2P$. If $Z$ is a discrete random variable, $\rho$ is  analyzed   using Markov-chain theory.

\item Assume infinite battery capacity.    By  applying derived results on transmission probability and tools from stochastic geometry,    the network throughput is  maximized by   optimizing $P$ for   given  $\lambda_e$. Consider the condition that  $\lambda_0$ is smaller than the network interference temperature evaluated for equal $P$ and $\lambda_e$. If this condition holds, the maximum throughput $R^*$ is shown to be 
\begin{equation}
R^* = \lambda_0  \log_2(1+\theta)\nn
\end{equation}
where $\theta$ is the target SINR. If the aforementioned condition is not satisfied, 
\begin{equation}
R^* = \frac{\lambda_0\lambda_e}{P^*}  \log_2(1+\theta)\nn
\end{equation}
where the optimal transmission power $P^*$ is larger than $\lambda_e$ and solves a derived  polynomial equation.

\item Furthermore, the limits of the maximum network throughput are  obtained for the extreme cases of high energy-arrival rates ($\lambda_e \rightarrow\infty$) and   dense   networks ($\lambda_0 \rightarrow\infty$). Specifically. 
\begin{align}
\lim_{\lambda_e\rightarrow\infty}R^*(\lambda_e) &= \min\l(\lambda_0, \frac{\mu_{\epsilon}}{\theta^{\frac{2}{\alpha}}}\r)\log_2(1+\theta)\nn\\ 
\lim_{\lambda_0\rightarrow\infty}R^*(\lambda_0) &=  \frac{\mu_{\epsilon}}{\theta^{\frac{2}{\alpha}}}\log_2(1+\theta) \nn
\end{align}
where $\mu_{\epsilon}$ is a positive constant determined by the maximum outage probability. 
\end{enumerate}

The remainder of this paper is organized as follows. The network model and performance metric  are described in Section~\ref{Section:Models}. The transmission probability is  analyzed in Section~\ref{Section:TXProb}. The results are applied to maximize the network throughput in Section~\ref{Section:Thput}. Numerical  results are presented in Section~\ref{Section:Simulation} followed by concluding remarks in Section~\ref{Section:Conclusion}.

\section{Model and Metric }\label{Section:Models}

\subsection{Network Model}
As illustrated in Fig.~\ref{Fig:Sys},   the transmitters $\{T\}$ of the MANET are  distributed in the Euclidean plane $\mathds{R}^2$ following  a homogeneous PPP $\Phi$ with density $\lambda_0$, where $T$ denotes the coordinates of a transmitter.  Each transmitter is associated with an intended receiver located at a unit distance,  {which is assumed to simplify  the expression for the received signal power by omitting the data-link path loss.} The signal transmitted  by $T$ with power $P$ is received by a receiver located at $X$ with power equal to $P|X-T|^{-\alpha}$ with $\alpha  > 2$ being the path-loss exponent. In other words, propagation is characterized by path loss while  fading is omitted to simplify notation.\footnote{The consideration of random transmission distances and fading has no effect on the main  results except that the parameter $\mu_\epsilon$ defined in \eqref{Eq:NomDen} has to be redefined by including additional random variables.}

\begin{figure}[t]
\begin{center}
\includegraphics[width=8.5cm]{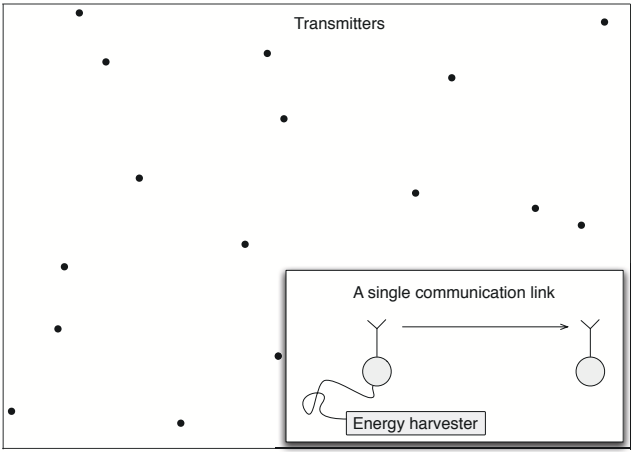}
\caption{Single-antenna transmitters in the MANET are modeled as a homogeneous PPP in  the horizontal plane. Each transmitter is powered by an energy harvester and transmits to an intended receiver at an unit distance.   }
\label{Fig:Sys}
\end{center}
\end{figure}

Time is partitioned into slots of  unit duration  with $t$ denoting  the slot index. The amount of energy harvested by the typical harvester in the $t$-th slot is represented by  the nonnegative random variable $Z_t$.  

\begin{assumption}\label{AS:Z:Dist}\emph{The energy-arrival process $\{Z_t\}\subset\mathds{R}^+$ is an i.i.d. sequence and independent of other  energy-arrival processes. Moreover,  the \emph{cumulant  generating function} of the random variable $(Z_t - \beta)$ with $\beta$ being a given constant, namely  $\ln \E\l[e^{r(Z_t- \beta)}\r]$,  has a  root $r^*(\beta)$ such that $r^*(\beta) > 0 $ if  $\beta >  \lambda_e$ and $r^*(\beta) < 0 $ if  $\beta <  \lambda_e$.}
\end{assumption}

This assumption allows  the use of  results on the large deviation of random walks in the  subsequent analysis \cite{GallagerBook:StochasticProcs:95}. Let $B$ denote the battery capacity identical for all harvesters. Moreover, the  typical  transmitter and the battery level of the corresponding (typical) harvester are represented by $T_0$ and $S_t$, respectively.  A transmitter transmits one  data packet with fixed power $P$ whenever the corresponding battery level exceeds $P$.  As a result, $S_t$ evolves as 
\begin{equation}\label{Eq:Battery:Evol}
S_t = \min(S_{t-1} + Z_{t}  - P I(S_{t-1}  \geq P), B), \quad t = 1, 2, \cdots
\end{equation}
where $S_0 = 0$ and the indicator function $I(A)$ for an event $A$ is equal to one if $A$ occurs  or else is zero. The battery-level evolutions in  prior work are similar to that in \eqref{Eq:Battery:Evol} but with fixed power $P$ replaced with power adapted to factors such as the channel state and battery level \cite{HoZhang:EnergyAllocatnEnergyHarvestConstraints,  Gatzianas:ControlNetworkRechargeableBatt:2010, Lau:DelayOptimInterfNet:2012}.

\subsection{Performance Metric}
Assume infinitely backlogged and packetized data.   The transmission probability $\rho$ can be written as 
\begin{equation}\label{Eq:TXProb:Def}
\rho = \lim_{n\rightarrow\infty}\frac{1}{n}\sum_{t=1}^n \E[I(S_t \geq P)].
\end{equation}
 According to Coloring  Theorem \cite{Kingman93:PoissonProc},  the process of active  transmitters, denoted as  $\Pi$,  is a PPP   with density $\lambda_t = \rho \lambda_0$. Data  is encoded at a fixed rate $\log_2(1+\theta)$ bit/s/Hz with  $\theta > 0$ being the target SINR.  Correct decoding of a data packet  requires the received SINR to be no smaller than $\theta$ or else an outage event occurs. The   outage probability $\Pout$ is defined as $\Pout = \Pr(\SINR < \theta)$ where $\SINR$ represents the received SINR at the  receiver for $T_0$. It is assumed that 
 the  receiver for $T_0$ is  located at the origin, which  does not compromise the  generality based on Slyvnyak's Theorem \cite{StoyanBook:StochasticGeometry:95}, and that noise has unit  variance. Based on these assumptions,  $\Pout$ can be written   as
\begin{align}
\Pout  &= \Pr\l(\frac{P}{\sum\limits_{T\in\Pi\backslash\{T_0\}}P|T|^{-\alpha}+1} < \theta\r)\label{Eq:Pout:aa}\\
&=\Pr\l(\sum_{T\in\Pi\backslash\{T_0\}}|T|^{-\alpha} > \frac{1}{\theta} - \frac{1}{P}\r)\label{Eq:Pout:aaa}\\
&=\Pr\l(\sum_{T\in\Pi}|T|^{-\alpha} > \frac{1}{\theta}-\frac{1}{P} \r)  \label{Eq:Pout:a}
\end{align}
where the summation in \eqref{Eq:Pout:aa} represents the interference power and \eqref{Eq:Pout:a} uses  Slyvnyak's Theorem.  It is worth mentioning that the probabilities  in \eqref{Eq:Pout:aa} and \eqref{Eq:Pout:aaa} are \emph{palm measures}  \cite{StoyanBook:StochasticGeometry:95} but that in \eqref{Eq:Pout:a} is not. To ensure the quality-of-service, an  outage-probability constraint is applied such that $\Pout \leq \epsilon$ with $0 < \epsilon \ll 1$.  The performance metric  is the spatial network-throughput density $R$ (bit/s/Hz/unit-area) that is referred to simply as the network throughput and defined as
\begin{align}
R &= \lambda_t  \log_2(1+\theta) \label{Eq:Thput:Def}\\
 &= \lambda_0\rho  \log_2(1+\theta) \label{Eq:Thput:Def:a}
\end{align}
where $\rho$  is controlled by adjusting  $P$ such that    the outage-probability constraint is satisfied. To be precise, $R$ should be scaled by the success  probability $(1-\Pout)$ but this factor  is close to one given $\epsilon \ll 1$  and thus  omitted for ease of notation.

\section{Transmission Probability}\label{Section:TXProb}

\subsection{Infinite Battery Capacity}
Deriving transmission probability requires  analyzing the distribution of battery levels at  energy harvesters. 
By substituting $B\rightarrow\infty$ into \eqref{Eq:Battery:Evol}, the battery level at the typical energy harvester with infinite battery capacity evolves  as 
\begin{equation}\label{Eq:Battery:Evol:Infinite}
S_{t} = S_{t-1} + Z_t  - P I(S_{t-1}  \geq P).
\end{equation}
The distribution of $S_t$ can be related to  the threshold-crossing probability for a random walk as follows. 
Denote the instants when the battery level crosses the threshold $P$ from below as $t_1, t_2, \cdots$, namely that   $S_{t_{n}-1} < P$ and $S_{t_{n}} \geq  P$ for $n=1, 2, \cdots$.  These time instants are grouped into the set $\mathcal{T} = \{t_1, t_2, \cdots\}$. 
Moreover, define the random variable $\bar{Z}_t = Z_t - P$ and two random processes $\{G_t\}$ and $\{G'_t\}$ as 
\begin{align}
G_{t} &= \max(G_{t-1} + \bar{Z}_t, 0)\label{Eq:RandProc1:Def}\\
G_t' &= \l\{\begin{aligned}
&S_t, && t \in \mathcal{T}\\
&G'_{t-1}, && t \notin \mathcal{T} 
\end{aligned}
\r.\label{Eq:RandProc2:Def}
\end{align}
with $G_0 =0 $ and $G_0'= P$. 
Based on \eqref{Eq:RandProc1:Def}, the  probability that  $\{G_t\}$ crosses a threshold $x > 0$ in the $t$-th slot  can be written as 
\begin{equation}
\begin{aligned}
 \Pr(G_t > x) = &\\ \Pr\Bigg(\max\Bigg(&\bar{Z}_t, \bar{Z}_t + \bar{Z}_{t-1}, \cdots, \sum_{n=1}^t \bar{Z}_n\Bigg) > x\Bigg). 
 \end{aligned} \label{Eq:Prob:G}
\end{equation}
Consider the random walk $\{\sum_{n=1}^m \bar{Z}_{t - m+1}\}$ starting in the $t$-th slot and progressing backwards. The probability in \eqref{Eq:Prob:G} can be interpreted as the probability  that the said  random walk with a negative drift ever crosses the threshold $x$ by the $t$-th step. Applying Kingman bound on the threshold-crossing probability for a random walk \cite[p234]{GallagerBook:StochasticProcs:95} gives that for all $t > 0$, 
\begin{equation}\label{Eq:Gt:Cross}
\Pr(G_t > x) \leq e^{-r^*(P) x}, \qquad \lambda_e < P
\end{equation}
where $r^*(P)$ as defined in Assumption~\ref{AS:Z:Dist} with $\beta = P$ is  the positive root of the cumulant generating function of $\bar{Z}_t$.

\begin{lemma}\label{Lem:St:UB}\emph{Given infinite battery capacity, the battery level $S_t$ satisfies 
\begin{equation}
S_t  \leq G_t + G'_t. \nn
\end{equation}
}
\end{lemma}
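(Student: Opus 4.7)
The plan is to prove the bound by straightforward induction on $t$, handling the three possible behaviors of $S_t$ relative to the threshold $P$ separately, but only one of them actually requires the inductive hypothesis.

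Before the induction, I would record one easy but crucial fact: $G'_t \geq P$ for every $t \geq 0$. Indeed, by the definition in \eqref{Eq:RandProc2:Def}, either $G'_t = G'_0 = P$ (if no crossing from below has happened yet) or $G'_t = S_{t_n}$ for the most recent $t_n \in \mathcal{T}$, and in the latter case $S_{t_n} \geq P$ by the very definition of a crossing from below. I would also note the monotonicity $G_t \geq G_{t-1}+\bar{Z}_t$ that follows directly from \eqref{Eq:RandProc1:Def}, and the obvious dichotomy: if $t \notin \mathcal{T}$ and $S_{t-1} < P$, then $S_t < P$ (otherwise $t$ would be a crossing from below).

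The base case $t=0$ is immediate since $S_0 = 0 \leq P = G_0 + G'_0$. For the inductive step, assuming $S_{t-1} \leq G_{t-1}+G'_{t-1}$, I would split into three cases based on \eqref{Eq:Battery:Evol:Infinite}.

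\textbf{Case A} ($t \in \mathcal{T}$): by definition $G'_t = S_t$, and since $G_t \geq 0$, the inequality $S_t \leq G_t + G'_t$ is trivial.

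\textbf{Case B1} ($t \notin \mathcal{T}$ and $S_{t-1} \geq P$): transmission occurs, so $S_t = S_{t-1} + \bar{Z}_t$ and $G'_t = G'_{t-1}$. Applying the inductive hypothesis and then $G_t \geq G_{t-1}+\bar{Z}_t$ yields
\begin{equation}
S_t \leq G_{t-1} + G'_{t-1} + \bar{Z}_t \leq G_t + G'_t. \nonumber
\end{equation}

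\textbf{Case B2} ($t \notin \mathcal{T}$ and $S_{t-1} < P$): by the dichotomy above $S_t < P$, and combining with $G'_t \geq P$ and $G_t \geq 0$ gives $S_t < P \leq G'_t \leq G_t + G'_t$.

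The step I initially expected to be the obstacle was Case B2, where a naive attempt to push the induction through $G$'s recursion fails because $G_t = \max(G_{t-1}+\bar Z_t, 0)$ drops the positive contribution of $Z_t$ whereas $S_t$ absorbs it fully. The resolution is that Case B2 is precisely where induction is \emph{not} needed: whenever $S$ sits below the transmission threshold, the reference level $G'_t$ already bounds it unconditionally, so the bound comes essentially for free from observation (i).
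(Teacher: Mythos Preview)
Your induction argument is correct and complete. It differs genuinely from the paper's proof of this lemma, which is a direct (non-inductive) argument: for $t\notin\mathcal T$ the paper takes the most recent crossing time $t_0\in\mathcal T$, argues that between $t_0$ and $t$ the process either transmits every slot (so $S_t=S_{t_0}+\sum_{m=t_0+1}^t\bar Z_m$) or has already dropped below $P$ and stayed there (so $S_t<S_{t_0}$), bundles both into the bound $S_t\le S_{t_0}+\max\bigl(0,\bar Z_t,\ldots,\sum_{m=t_0+1}^t\bar Z_m\bigr)$, and then compares this max with the explicit unrolled expansion of $G_t$ back to $G_{t_0}$.

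Your route is cleaner in two respects. First, the key observation $G'_t\ge P$ lets you dispose of the ``below threshold'' regime (your Case~B2) with a one-line absolute bound, whereas the paper folds that regime into a single inequality together with the ``above threshold'' regime, which makes the paper's bound \eqref{Eq:St:UB} look slicker but hides why it works. Second, induction avoids the need to unroll $G_t$ explicitly; the one-step monotonicity $G_t\ge G_{t-1}+\bar Z_t$ suffices. Interestingly, the paper itself switches to exactly your inductive scheme (including the use of $G'_t\ge P$) in Appendix~\ref{App:ExcessEnergy:B} when proving the finite-battery analogue, so your approach is in fact the one that generalizes more directly. The paper's direct argument, on the other hand, makes the connection to the Lindley-type expansion $G_t=\max_{0\le k\le t-t_0}\sum_{m=t-k+1}^t\bar Z_m$ more visible, which is what is actually used downstream in \eqref{Eq:Prob:G}.
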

The proof of Lemma~\ref{Lem:St:UB} is provided in Appendix~\ref{App:St:UB}. Using \eqref{Eq:Gt:Cross} and Lemma~\ref{Lem:St:UB}, the threshold-crossing probability for the battery level can be shown to be bounded as follows. 

\begin{lemma}\label{Lem:St:Cross}\emph{Given infinite battery capacity and $ \lambda_e < P$, the distribution of  the battery level $S_t$  satisfies 
\begin{equation}\label{Eq:St:Cross}
\Pr(S_t > x) \leq 2e^{-\frac{1}{2}r^*(P)(x - 2P)} 
\end{equation}
with $r^*(P) > 0$. 
}
\end{lemma}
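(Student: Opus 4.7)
The plan is to combine Lemma~\ref{Lem:St:UB} with the tail bound \eref{Eq:Gt:Cross} via a union bound. By Lemma~\ref{Lem:St:UB} we have $S_t \le G_t + G'_t$, so splitting the excess as $x = (x-2P)/2 + (x+2P)/2$ gives
\[
\Pr(S_t > x) \;\le\; \Pr\!\left(G_t > \tfrac{x-2P}{2}\right) + \Pr\!\left(G'_t > \tfrac{x+2P}{2}\right).
\]
The hypothesis $\lambda_e < P$ guarantees $r^*(P) > 0$, so \eref{Eq:Gt:Cross} immediately controls the first probability by $e^{-r^*(P)(x-2P)/2}$, accounting for one of the two terms in the claimed bound.

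For the second probability, I would exploit the definition of $G'_t$ as the battery level $S_\tau$ at the most recent threshold-crossing time $\tau$. Since $S_{\tau-1} < P$ and $S_\tau = S_{\tau-1} + Z_\tau$ at such a time, the event $\{G'_t > y\}$ forces the triggering arrival to satisfy $Z_\tau > y - P$. A Chernoff estimate at exponent $r^*(P)$, combined with the identity $\mathsf{E}[e^{r^*(P)(Z-P)}] = 1$ that defines $r^*(P)$ (Assumption~\ref{AS:Z:Dist}), yields the per-slot tail bound $\Pr(Z > y - P) \le e^{-r^*(P)(y-2P)}$. Choosing $y = (x+2P)/2$ so that $y - 2P = (x-2P)/2$ matches the exponent in the first term, and the two contributions together produce the factor~$2$ in the statement.

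The subtle point is that $\tau$ is a random index, adaptively chosen from the past of $\{Z_s\}$, so the per-slot Chernoff estimate does not apply verbatim to $Z_\tau$. I would handle this by observing that $\{G'_t > y\}$ is contained in the event that \emph{some} crossing in $[1,t]$ has overshoot past $P$ exceeding $y - P$, and then controlling this union uniformly in $t$ via a supermartingale-type structure: the identity $\mathsf{E}[e^{r^*(P)\bar{Z}}] = 1$, together with the reflection at level $P$ built into \eref{Eq:Battery:Evol:Infinite}, makes $e^{r^*(P)(S_s - P)}$ behave like a non-negative supermartingale along crossing epochs, so a maximal inequality delivers a $t$-independent estimate. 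Adding the bounds for the two pieces then yields $\Pr(S_t > x) \le 2 e^{-r^*(P)(x-2P)/2}$, as claimed.
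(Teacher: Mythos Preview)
Your approach is essentially identical to the paper's: decompose via Lemma~\ref{Lem:St:UB}, bound $G_t$ by the Kingman inequality \eqref{Eq:Gt:Cross}, and bound the overshoot $G'_t$ via a Chernoff estimate at exponent $r^*(P)$, with the split point $(x+2P)/2$ chosen so that the two exponents coincide (the paper introduces a free parameter $a$, conditions on $\{G'_t\ge a\}$ versus $\{G'_t<a\}$, and then optimizes to the same $a=(x+2P)/2$). The random-index subtlety you flag is genuine, but the paper does not address it either---it simply applies the Chernoff bound to $Z_{t_0}$ as though $t_0$ were deterministic---so your argument is already at least as careful as the published one; your supermartingale patch is a reasonable instinct but is not needed to match the paper, and as stated it would require more work to verify (conditioning on ``$t_0$ is the last crossing before $t$'' can bias $S_{t_0}$ upward, so the claimed supermartingale property along crossing epochs is not obvious).
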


The proof of Lemma~\ref{Lem:St:Cross} is given in Appendix~\ref{App:St:Cross}. Define the \emph{energy-overshoot function} $D_t: \mathds{R}^+ \rightarrow\mathds{R}^+$ as the expected amount of energy stored in the typical  harvester  in excess of a threshold $x > 0 $ in the $t$-th slot: 
\begin{equation}\label{Eq:EnergEx} 
D_t(x) = \int_x^\infty (y-x) f_{s}(y, t) dy 
\end{equation}
where $f_s(y, t)$ represents the probability density function  of $S_t$. The function $D_t(x)$ can be bounded as shown in Lemma~\ref{Lem:ExcessEnergy}, which is proved in Appendix~\ref{App:ExcessEnergy} using Lemma~\ref{Lem:St:Cross}.

\begin{lemma}\label{Lem:ExcessEnergy}\emph{Given infinite battery capacity and $\lambda_e < P$, the energy-overshoot function $D_t(x)$ satisfies 
\begin{equation}
D_t(x)  \leq \frac{4}{r^*(P)} e^{-\frac{1}{2}r^*(P)(x - 2P)}, \qquad \forall \ t \geq 0 \nn
\end{equation}
with $r^*(P) > 0$. 
}
\end{lemma}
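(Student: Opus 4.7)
The plan is to recognize $D_t(x)$ as the expectation of the positive part $(S_t-x)^+$, which by the standard layer-cake (tail-integral) representation of a nonnegative random variable can be written as
\begin{equation}
D_t(x) \;=\; \int_0^\infty \Pr(S_t > x + u)\, du. \nn
\end{equation}
I would verify this either by a direct change of variables $y = x+u$ in the definition \eqref{Eq:EnergEx} combined with integration by parts, or by invoking Fubini on $\int_x^\infty \int_x^y du\, f_s(y,t)\, dy$.

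Next, I would substitute the tail bound from Lemma~\ref{Lem:St:Cross}, applied at the point $x+u$ rather than $x$, to get
\begin{equation}
\Pr(S_t > x + u) \;\leq\; 2\, e^{-\frac{1}{2}r^*(P)(x + u - 2P)}, \nn
\end{equation}
which is valid for every $u \geq 0$ since the right-hand side is an upper bound on a probability regardless of whether the argument $x+u-2P$ is positive or negative. Plugging this into the integral representation factors out the $u$-independent part $e^{-\frac{1}{2}r^*(P)(x-2P)}$ and leaves a single exponential integral in $u$.

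Finally, I would evaluate $\int_0^\infty e^{-\frac{1}{2}r^*(P) u}\, du = 2/r^*(P)$, noting this converges precisely because the condition $\lambda_e < P$ in Assumption~\ref{AS:Z:Dist} guarantees $r^*(P) > 0$. Combining the constant factors $2 \cdot (2/r^*(P)) = 4/r^*(P)$ yields the stated bound uniformly in $t$. I do not foresee a real obstacle here; the only subtle points are checking that the tail bound remains meaningful when $x < 2P$ (it does, since probabilities are bounded by $1$ anyway) and being careful to apply Lemma~\ref{Lem:St:Cross} at the shifted threshold $x+u$ rather than at $x$, so that the exponential decay in $u$ is what drives integrability.
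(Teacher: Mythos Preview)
Your proposal is correct and follows essentially the same approach as the paper: the paper also rewrites $D_t(x)$ via the tail-integral representation $\int_x^\infty \Pr(S_t>y)\,dy$ (obtained by integration by parts), substitutes the bound from Lemma~\ref{Lem:St:Cross}, and integrates the exponential. Your version with the shift $y=x+u$ is merely a change of variables of the same computation.
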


Using  Lemma~\ref{Lem:ExcessEnergy}, the main result of this section is readily obtained as shown below. 

\begin{theorem}\label{Theo:TXProb:InfBattery} \emph{Given infinite battery capacity,  the transmission  probability  is 
\begin{equation}
\rho =\min\l(1, \frac{\lambda_e}{P}\r).\nn
\end{equation}}
\end{theorem}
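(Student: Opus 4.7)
The plan is to combine an energy-conservation identity with the tail bound of Lemma~\ref{Lem:ExcessEnergy}. Telescoping the recursion \eqref{Eq:Battery:Evol:Infinite} (with $B=\infty$) over $n$ slots gives the pathwise identity $S_n = \sum_{t=1}^n Z_t - P N_n$, where $N_n := \sum_{t=1}^n I(S_{t-1}\geq P)$ counts the number of transmissions by slot $n$. Taking expectations, dividing by $n$, and rearranging yields
\begin{equation}
\frac{\E[N_n]}{n} = \frac{\lambda_e}{P} - \frac{\E[S_n]}{nP}. \nn
\end{equation}
Because the one-step index shift between $I(S_t\geq P)$ appearing in \eqref{Eq:TXProb:Def} and $I(S_{t-1}\geq P)$ appearing in $N_n$ contributes only a vanishing $O(1/n)$ boundary term under Cesaro averaging, analyzing $\rho$ reduces to analyzing the asymptotic behaviour of $\E[S_n]/n$. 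The upper bound $\rho\leq\min(1,\lambda_e/P)$ is then immediate from $N_n\leq n$ and $S_n\geq 0$.

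For the matching lower bound I split into two regimes. In the subcritical regime $\lambda_e<P$, Assumption~\ref{AS:Z:Dist} gives $r^*(P)>0$, so I apply Lemma~\ref{Lem:ExcessEnergy} with $x=0$. Since $S_n\geq 0$, the definition \eqref{Eq:EnergEx} yields $\E[S_n] = D_n(0) \leq (4/r^*(P))\,e^{r^*(P)P}$, a finite constant independent of $n$. Dividing by $n$ and letting $n\to\infty$ forces $\E[S_n]/n\to 0$, which delivers $\rho=\lambda_e/P$. In the supercritical regime $\lambda_e>P$, I instead exploit the pathwise lower bound $S_n \geq \sum_{t=1}^n (Z_t-P)$, a consequence of $N_n\leq n$. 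By the strong law of large numbers applied to the i.i.d.\ sequence $\{Z_t\}$, the right-hand side diverges almost surely, so $S_n\to\infty$ a.s.\ and hence $\Pr(S_{n-1}\geq P)\to 1$; combined with bounded convergence and Cesaro averaging, this yields $\rho=1$.

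The main technical obstacle is the boundary case $\lambda_e=P$, at which $r^*(P)=0$ so the exponential tail bound of Lemma~\ref{Lem:ExcessEnergy} degenerates, while the strict positive drift needed for the supercritical argument is also lost. I would handle this either via a right-continuity argument (taking $P\downarrow\lambda_e$ through the subcritical formula, since $\rho$ varies continuously in $P$ on $(\lambda_e,\infty)$ and $\lambda_e/P\to 1$), or more directly by bounding $\E[S_n]=O(\sqrt{n})$ using the maximum of the associated zero-drift random walk, which still gives $\E[S_n]/n\to 0$ and therefore $\rho=1$ at the boundary. Either route closes the gap and establishes $\rho=\min(1,\lambda_e/P)$ in all cases.
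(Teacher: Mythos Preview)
Your proposal is correct and follows essentially the same route as the paper: the energy-conservation identity (the paper's \eqref{Eq:Total:Energy}), Lemma~\ref{Lem:ExcessEnergy} to bound $\E[S_t]$ uniformly in the subcritical case, the pathwise lower bound $S_n\geq\sum_{m=1}^n\bar Z_m$ combined with the law of large numbers in the supercritical case, and a continuity/limiting argument at the boundary $\lambda_e=P$. Your variant is slightly cleaner in two places---you apply Lemma~\ref{Lem:ExcessEnergy} directly at $x=0$ rather than writing $\E[S_t]\leq x+D_t(x)$, and you invoke the strong law plus bounded convergence instead of the paper's weak-law/$\varepsilon$--$\delta$ computation---but the skeleton is identical.
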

\begin{proof}
First, consider the case of  $ \lambda_e > P$. 
Replacing the indicator function  in \eqref{Eq:Battery:Evol:Infinite} with one yields a lower bound on $S_t$, namely that $S_t \geq \sum_{m=1}^t \bar{Z}_m$. As a result,   $\rho$ given in \eqref{Eq:TXProb:Def} can be lower bounded as 
\begin{align}
\!\!\!\!\rho & \geq\! \lim_{n\rightarrow\infty} \frac{1}{n} \sum_{t=1}^n \E\l[I\l( \sum_{m=1}^t\bar{Z}_m \geq P \r)\r]\nn\\
& =\! \lim_{n\rightarrow\infty} \frac{1}{n} \sum_{t=1}^n \Pr\l(\!\frac{1}{t}\!\sum_{m=1}^t\! (Z_m\!-\!\lambda_e)\!  \geq\! P\! -\! \lambda_e \!+\! \frac{P}{t}\!\r). \label{Eq:TXProb:LB}
\end{align}
Using $\E[Z_n] = \lambda_e$ and  applying the  weak law of large numbers  \cite{GallagerBook:StochasticProcs:95},  for given $\tau > 0$ and $\delta > 0$, there exists $k > 0$ such that for all $t \geq k$, 
\begin{equation}\label{Eq:LLN}
\Pr\l(\frac{1}{t}\sum_{m=1}^t (Z_m - \lambda_e) \geq -\tau \r) \geq 1 - \delta. 
\end{equation}
Since $\lambda_e > P$ , it follows that given $\delta > 0$, there exist $\tau > 0$ and $k' > 0$ such that $\tau <  (\lambda_e - P - \frac{P }{t})$ for all $t \geq k'$.  
Using this fact and  substituting \eqref{Eq:LLN} into \eqref{Eq:TXProb:LB}, 
\begin{align}
\rho & \geq \lim_{n\rightarrow\infty} \frac{1}{n} \sum_{t=\max(k, k')}^n \Pr\l(\frac{1}{t}\sum_{m=1}^t (Z_m - \lambda_e) \geq -\tau\r) \nn\\
&\geq \lim_{n\rightarrow\infty} \frac{1}{n} \sum_{t=\max(k, k')}^n (1-\delta) \nn\\
& = \lim_{n\rightarrow\infty} \frac{[n - \max(k, k')](1-\delta)}{n} \nn\\
& = 1 - \delta. \label{Eq:OneDelta:a}
\end{align}
As $\delta$ is arbitrary and $\rho \leq 1$,  the desired result for the case of $\lambda_e > P$ follows from \eqref{Eq:OneDelta:a} and letting $\delta \rightarrow 0$.

Next, consider the case of $ \lambda_e < P$.   The  expected total  amounts of  harvested and transmitted energy   by the $t$-th slot differ by the battery level in the $t$-th  slot, namely 
\begin{equation}\label{Eq:Total:Energy}
\sum_{m=1}^t \E[Z_m] = P \sum_{m=1}^t \E[I(S_{m-1} > P)] + \E[S_t]. 
\end{equation}
Since $\E[S_t] \geq 0$, 
\begin{align}
\lambda_e &\geq P\lim_{t\rightarrow\infty} \frac{1}{t}\sum_{m=1}^t \E[I(S_{m-1} > P)]\nn\\
& = P\rho\label{Eq:PRo}
\end{align}
where \eqref{Eq:PRo} is obtained using the definition of $\rho$ in   \eqref{Eq:TXProb:Def}. 
It follows that 
\begin{equation}
\rho \leq \frac{\lambda_e}{P}, \qquad \lambda_e < P. \label{Eq:TxProb:UB}
\end{equation}
Note that $\E[S_t] \leq x + D_t(x)$ with $x > 0$. 
Using Lemma~\ref{Lem:ExcessEnergy}, for given $\delta > 0$, there exists $x > 0$ such that $\E[S_t] \leq x + \delta$. Combining this fact and \eqref{Eq:Total:Energy} yields 
\begin{align}
\lambda_e &=  P\lim_{t\rightarrow\infty} \frac{1}{t}\sum_{m=1}^t \E[I(S_{t-1} > P)] + \lim_{t\rightarrow\infty}\frac{\E[S_t]}{t} \nn\\
&\leq P\lim_{t\rightarrow\infty} \frac{1}{t}\sum_{m=1}^t \E[I(S_{t-1} > P)] + \lim_{t\rightarrow\infty}\frac{x + \delta}{t} \nn\\
& = P\rho.  \nn
\end{align}
As a result, 
\begin{equation}
\rho \geq \frac{\lambda_e}{P}, \qquad \lambda_e <  P. \label{Eq:TxProb:LB}
\end{equation}
The desired result for the case of $\lambda_e < P$ is proved by combining \eqref{Eq:TxProb:UB} and \eqref{Eq:TxProb:LB}. 

Last, the desired result  for the boundary case of $\lambda_e = P$ is proved by using the results proved above for $\lambda_e \neq P$ and letting $P \rightarrow\lambda_e$ from either the right or the left, completing the proof. 
\end{proof}

\begin{remark}\emph{According to Theorem~\ref{Theo:TXProb:InfBattery}, if  $P  < \lambda_e$, each transmitter  transmits continuously in the steady state and is free of interruption caused by energy shortage. However, continuous transmissions are at the cost that the fraction of harvested energy at the rate of  $(\lambda_e - P)$ is never used for transmission and hence wasted. Next, if $P  >  \lambda_e$,  there exists nonzero probability that the battery level of a transmitter is below $P$ and hence transmission can be  interrupted. Nevertheless, all harvested energy will be eventually used for transmission.} 
\end{remark}

\subsection{Finite Battery Capacity} \label{Section:Finite:B}The dynamics of the battery level $S_t$  at the typical harvester   are characterized in \eqref{Eq:Battery:Evol}. Let $\tilde{D}_t(x)$ denote the energy-overshoot function for the case of finite-battery capacity, which is defined similarly as $D_t(x)$ in the preceding section. Given finite battery capacity, battery overflow can occur such that the battery saturates  and arriving energy has to be discarded, where the expected amount of discarded energy is measured by $\tilde{D}_t(B)$. For the case of $\lambda_e < P$,  $\tilde{D}_t(B)$ can be bounded by the same upper bound on $D_t(x)$ (see Lemma~\ref{Lem:ExcessEnergy}) as shown below. 

\begin{lemma}\label{Lem:ExcessEnergy:B}\emph{Given finite battery capacity and $\lambda_e < P$, the energy-overshoot function $\tilde{D}_t(B)$ satisfies 
\begin{equation}
\tilde{D}_t(B)  \leq \frac{4}{r^*(P)} e^{-\frac{1}{2}r^*(P)(B - 2P)} \nn
\end{equation}
with  $r^*(P) > 0$. 
}
\end{lemma}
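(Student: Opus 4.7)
The plan is to mirror the structure of the proof of Lemma~\ref{Lem:ExcessEnergy}: derive a pointwise tail bound on the relevant battery-level variable and then integrate it. Let $W_t = S_{t-1} + Z_t - P\, I(S_{t-1} \geq P)$ denote the pre-truncation battery level in the finite-capacity case, so that $S_t = \min(W_t, B)$ and the energy discarded in slot $t$ equals $(W_t - B)_+$. Then $\tilde{D}_t(B) = \E[(W_t - B)_+] = \int_B^\infty \Pr(W_t > y)\, dy$, which reduces the task to a tail bound on $W_t$.

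My first step is to couple the finite-battery process $\{S_t\}$ with the infinite-battery process $\{\tilde S_t\}$ of \eqref{Eq:Battery:Evol:Infinite}, driven by the same arrivals $\{Z_t\}$ and initialized at zero, and to prove by induction on $t$ the pointwise bound $W_t \leq \tilde S_t + P$. The inductive step splits into the four cases determined by whether each of $S_{t-1}, \tilde S_{t-1}$ lies above $P$. The only delicate case is $S_{t-1} < P \leq \tilde S_{t-1}$ (where the two processes make opposite transmission decisions), but there $W_t - \tilde S_t = S_{t-1} - \tilde S_{t-1} + P < P$, preserving the invariant. Applying Lemma~\ref{Lem:St:Cross} to $\tilde S_t$ then gives $\Pr(W_t > y) \leq \Pr(\tilde S_t > y - P) \leq 2 e^{-\frac{1}{2}r^*(P)(y - 3P)}$, and integrating from $B$ to infinity produces an exponentially decaying bound of the claimed form.

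The main obstacle I foresee is tightening the exponent from $(B - 3P)$, which drops out of the simple coupling above, to the stated $(B - 2P)$. I would recover this by adapting the finer decomposition behind Lemma~\ref{Lem:St:UB}: the reflected random walk $G_t$ depends only on $\{Z_t\}$ and is unchanged, while a crossing-based companion $G''_t$ built from the upcrossings of $P$ by the finite-battery process inherits an exponential tail from the one-step inequality $S_{t_n} < P + Z_{t_n}$ at each upcrossing, together with Assumption~\ref{AS:Z:Dist}. Splitting $W_t \leq G_t + G''_t$ and optimizing the threshold allocation between the two terms, exactly as in the proof of Lemma~\ref{Lem:St:Cross}, reproduces the sharper tail bound $\Pr(W_t > y) \leq 2 e^{-\frac{1}{2}r^*(P)(y - 2P)}$, after which integration over $y \in [B, \infty)$ closes the proof.
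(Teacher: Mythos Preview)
Your second route is essentially the paper's own argument: decompose into the reflected random walk $G_t$ plus an upcrossing marker, bound each tail via Kingman/Chernoff, balance the split as in Lemma~\ref{Lem:St:Cross}, and integrate as in Lemma~\ref{Lem:ExcessEnergy}. The paper does exactly this, with one cosmetic difference: it caps the reflected walk, redefining $G_t=\min(\max(G_{t-1}+\bar Z_t,0),\,B-P)$, proves by induction that the finite-battery level satisfies $S_t\le G'_t+G_t$ (with $G'_t$ recording $S_{t_n}$ at upcrossings of the finite-battery process), and then observes that the capped $G_t$ has no larger tails than the uncapped one, so the bound of Lemma~\ref{Lem:St:Cross} carries over verbatim.

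Two remarks. First, you are actually more careful than the paper on one point: you correctly work with the \emph{pre}-truncation level $W_t$, since $\tilde D_t(B)=\E[(W_t-B)_+]$, whereas the paper's induction bounds only the truncated $S_t$ and then simply asserts the conclusion for $\tilde D_t$. Second, there is a small wrinkle in your decomposition as written. If $G''_t$ records the post-truncation value $S_{t_n}$ while $G_t$ is left uncapped, then at an upcrossing slot where overflow occurs ($W_{t_n}>B$) the claimed inequality $W_{t_n}\le G_{t_n}+G''_{t_n}=G_{t_n}+B$ need not hold unless $B\ge 2P$. The clean fix is to let $G''_t$ record $W_{t_n}$ rather than $S_{t_n}$: then $W_t=G''_t$ trivially at upcrossings, the case $t\notin\mathcal T$ is unchanged since $S_{t_0}\le W_{t_0}$, and your one-step inequality $W_{t_n}<P+Z_{t_n}$ still yields the same Chernoff tail $e^{-r^*(P)(a-2P)}$. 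With that adjustment the argument goes through for all $B$ and delivers the stated exponent $(B-2P)$.
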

Lemma~\ref{Lem:ExcessEnergy:B} is proved in Appendix~ \ref{App:ExcessEnergy:B}. { Next, the tail probability of $S_t$ can be bounded for the case of $\lambda_e > P$ as shown in the following lemma, which is  proved in Appendix~\ref{App:BatteryTail:B}. 

\begin{lemma}\label{Lem:BatteryTail:B}\emph{Given finite battery capacity and $\lambda_e > P$, the distribution of the battery level $S_t$ satisfies 
\begin{equation}
\lim_{n\rightarrow\infty}\frac{1}{n}\sum_{t=1}^n\Pr(S_t < x)\leq e^{r^*(P)(B-x)} \nn
\end{equation}
with  $r^*(P) < 0$ and $x\in[0, B]$. 
}
\end{lemma}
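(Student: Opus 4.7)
The plan is to dualize the argument behind Lemma~\ref{Lem:St:Cross} by studying the battery deficit $V_t := B - S_t \in [0,B]$. When $\lambda_e > P$, the increment $\bar{Z}_t := Z_t - P$ has positive mean, so $V_t$ should behave like a reflected random walk with negative drift, for which a Kingman-type bound yields an exponential tail on $\{V_t > B-x\} = \{S_t < x\}$. From the battery recursion \eqref{Eq:Battery:Evol} one checks case by case that $V_t = \max(V_{t-1} - \bar{Z}_t, 0)$ in a transmission slot ($V_{t-1} \leq B - P$) and $V_t = \max(V_{t-1} - \bar{Z}_t - P, 0)$ otherwise; in either case, because the extra $-P$ only shrinks the deficit,
\begin{equation*}
V_t \leq \max(V_{t-1} - \bar{Z}_t, 0), \qquad V_0 = B.
\end{equation*}

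I would then introduce the auxiliary Lindley walk $V'_t := \max(V'_{t-1} - \bar{Z}_t, 0)$ driven by the same sequence $\{\bar{Z}_t\}$ with $V'_0 = B$, so that a straightforward induction yields $V_t \leq V'_t$ for all $t$. Unrolling the Lindley recursion gives
\begin{equation*}
V'_t = \max\Bigl(B - \sum_{m=1}^{t}\bar{Z}_m,\; M_t\Bigr),
\end{equation*}
where $M_t := \max_{2 \leq k \leq t+1}\bigl(-\sum_{m=k}^{t}\bar{Z}_m\bigr)$ (with the empty-sum convention $M_t \geq 0$) is a standard Lindley walk from zero driven by $-\bar{Z}_m$. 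The union bound then yields
\begin{equation*}
\Pr(V_t > y) \leq \Pr\Bigl(B - \sum_{m=1}^{t}\bar{Z}_m > y\Bigr) + \Pr(M_t > y).
\end{equation*}

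For the second term, Kingman's bound \cite[p.~234]{GallagerBook:StochasticProcs:95} applied to the walk with increments $-\bar{Z}_m$ yields $\Pr(M_t > y) \leq e^{-s^{*}y}$ uniformly in $t$, where $s^{*}>0$ solves $\E[e^{-s^{*}\bar{Z}_t}] = 1$; by Assumption~\ref{AS:Z:Dist} this Kingman exponent is precisely $s^{*} = -r^{*}(P)$, so $\Pr(M_t > y) \leq e^{r^{*}(P)y}$. For the first term, the strong law of large numbers gives $\sum_{m=1}^{t}\bar{Z}_m/t \to \lambda_e - P > 0$ almost surely, hence $B - \sum_{m=1}^{t}\bar{Z}_m \to -\infty$ almost surely, so $\Pr\bigl(B - \sum_{m=1}^{t}\bar{Z}_m > y\bigr) \to 0$ as $t \to \infty$. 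Passing to the Cesaro mean kills this initial-condition contribution while preserving the Kingman bound on the second term, giving
\begin{equation*}
\lim_{n\to\infty}\frac{1}{n}\sum_{t=1}^{n}\Pr(V_t > y) \leq e^{r^{*}(P)y}.
\end{equation*}
Setting $y = B - x$ and using $\Pr(S_t < x) = \Pr(V_t > B - x)$ delivers the lemma.

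The main obstacle I anticipate is the state-dependent transmission rule, which makes the recursion for $V_t$ switch between two forms depending on whether $V_{t-1}$ crosses $B - P$, so $V_t$ is not itself a clean Lindley walk. The fix above---noting that the no-transmission branch merely subtracts an additional nonnegative $P$ and thus can only further shrink the deficit---decouples the analysis from the threshold and reduces the problem to a pure Lindley bound. A secondary nuisance is the non-stationary initial condition $V'_0 = B$, for which a pointwise Chernoff estimate on $B - \sum_{m=1}^{t}\bar{Z}_m$ is vacuous whenever $y \leq B$; this is circumvented precisely by working with the Cesaro mean, which is the quantity appearing in the statement.
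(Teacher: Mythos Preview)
Your proposal is correct and is essentially the paper's own proof, just written in the dual ``deficit'' variable: your dominating Lindley process $V'_t$ is precisely $B - Q_t$ for the paper's auxiliary process $Q_t = \min(Q_{t-1} + \bar{Z}_t, B)$ with $Q_0 = 0$, and your decomposition $V'_t = \max\bigl(B - \sum_{m=1}^t \bar{Z}_m,\; M_t\bigr)$ matches the paper's $Q_t = \min\bigl(\tilde{Q}_t,\; \sum_{m=1}^t \bar{Z}_m\bigr)$ term for term. The two remaining steps---Kingman's bound on the maximal-partial-sum term and the law-of-large-numbers/Ces\`aro argument to kill the initial-condition term---are identical in both proofs.
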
}

Using Lemma~\ref{Lem:ExcessEnergy:B} and \ref{Lem:BatteryTail:B}, bounds on the transmission probability are obtained as follows.

\begin{proposition}\label{Prop:TXProb:FiniteBattery} \emph{Given  finite battery capacity,   the transmission probability  $\rho$ satisfies the following. 
\begin{enumerate}
\item If $\lambda_e < P$
\begin{equation}
\frac{\lambda_e}{P}\l[1 - \frac{4}{\lambda_e r^*(P)} e^{-\frac{1}{2}r^*(P)(B - 2P)}\r] \leq \rho \leq \frac{\lambda_e}{P} \label{Eq:TXProb:B:a}
\end{equation}
with    $r^*(P) > 0$. 
\item If $\lambda_e > P$  
\begin{equation}
1 - e^{r^*(P)(B-P)}\leq \rho \leq 1 \label{Eq:TXProb:B:b}
\end{equation}
with $r^*(P) <  0$. 
\item If $\lambda_e = P$  
\begin{equation}
\!\!\!\!
\begin{aligned}
&\max_{x > 0}\l\{\frac{\lambda_e}{\lambda_e + x}\l[1 - \frac{4e^{-\frac{1}{2}r^*(\lambda_e+x)(B - 2\lambda_e-2x)}}{\lambda_e r^*(\lambda_e+x)} \r]\r\}  \\
&\qquad \leq \rho\leq 1 
\end{aligned}\label{Eq:TXProb:B:c}
\end{equation}
with $r^*(\lambda_e+x) >  0$ given $x>0$.
\end{enumerate}
}
\end{proposition}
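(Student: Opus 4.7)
All three cases rest on a common energy-balance identity. Taking expectations in \eqref{Eq:Battery:Evol}, summing from $t=1$ to $n$, and writing $\tilde D_t(B)$ for the expected energy discarded due to battery saturation in slot $t$, one obtains
\[
n\lambda_e \;=\; P\sum_{t=1}^{n}\Pr(S_{t-1}\ge P) \;+\; \E[S_n] \;+\; \sum_{t=1}^{n}\tilde D_t(B),
\]
so that dividing by $n$ and passing to the limit yields $\lambda_e = P\rho + \lim_n(\E[S_n]/n) + \lim_n(1/n)\sum_{t=1}^n \tilde D_t(B)$. Since the battery is capped at $B$, the term $\E[S_n]/n$ vanishes automatically.

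In Case~1 ($\lambda_e<P$) dropping the nonnegative limit terms on the right yields the upper bound $\rho\le\lambda_e/P$, while the matching lower bound \eqref{Eq:TXProb:B:a} follows on substituting the uniform bound on $\tilde D_t(B)$ from Lemma~\ref{Lem:ExcessEnergy:B} and rearranging. In Case~2 ($\lambda_e>P$) the upper bound $\rho\le 1$ is immediate, and the lower bound follows on rewriting $\rho = 1 - \lim_n(1/n)\sum_{t=1}^n\Pr(S_{t-1}<P)$ via \eqref{Eq:TXProb:Def} and invoking Lemma~\ref{Lem:BatteryTail:B} at $x=P$ (noting that $r^*(P)<0$ in this regime).

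Case~3 ($\lambda_e=P$) is the critical regime in which Lemmas~\ref{Lem:ExcessEnergy:B} and~\ref{Lem:BatteryTail:B} both become vacuous because $r^*(\lambda_e)=0$. The plan is to bootstrap from Case~1 through an auxiliary system: for each $x>0$, consider a system identical to ours except with transmission power $P'=\lambda_e+x>\lambda_e$. Because $\lambda_e<P'$, Case~1 supplies a lower bound on its transmission probability $\rho(P')$ that is precisely the expression inside the maximum in \eqref{Eq:TXProb:B:c}. The conclusion then follows by maximizing over $x>0$, provided one can establish the monotonicity $\rho(P)\ge\rho(P')$ whenever $P\le P'$.

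The main obstacle is establishing this monotonicity. A pointwise coupling of battery levels fails: the lower-power system has both a lower threshold and a smaller per-transmission draw, so neither process dominates the other along sample paths. I would instead couple both systems to a common energy-arrival sequence $\{Z_t\}$ and compare through their energy-balance identities; subtracting yields $\lambda_e[n_A(t)-n_B(t)] = x\,n_B(t) + (S^B_t-S^A_t) + (W^B_t-W^A_t)$, where $n_{A,B}(t)$ and $W^{A,B}_t$ are the cumulative transmission counts and overflows. Because the bounded storage terms vanish on the time scale $t$, the task reduces to showing that the long-run overflow rate of System~A does not exceed that of System~B, which is plausible because System~A empties its battery more aggressively and hence saturates less frequently. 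A cleaner alternative would be to phrase the monotonicity as a stochastic ordering of the stationary distributions of the two battery-level Markov chains obtained by comparing their transition kernels.
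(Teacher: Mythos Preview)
Your treatment of Cases~1 and~2 is essentially the paper's argument. (In fact your upper bound in Case~1 via the energy-balance identity is slightly more direct than the paper's route, which invokes Theorem~\ref{Theo:TXProb:InfBattery} together with the monotonicity of $\rho$ in the battery capacity $B$; both are valid.)

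For Case~3 you have correctly identified the strategy---bootstrap from Case~1 via an auxiliary system at power $P'=\lambda_e+x$ and then maximize over $x$---and correctly isolated the remaining obstacle, namely $\rho(P)\ge\rho(P')$. But you are working too hard. Your difficulty stems from trying to couple the $P$-system and the $P'$-system directly; as you note, they differ in \emph{both} threshold and draw, so a pathwise comparison is awkward. The paper sidesteps this by inserting an intermediate system: keep the original threshold $P$, but after every transmission discard an additional $x$ units of energy from the battery. Because this intermediate system has the \emph{same} transmission threshold as the original but removes strictly more energy at each transmission epoch, its battery level is pathwise dominated by that of the original system under the common arrival sequence $\{Z_t\}$, and hence its set of transmission slots is a subset of the original's; this immediately gives $\rho(P)\ge\rho(\text{intermediate})$. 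The paper then identifies the intermediate system's transmission probability with $\rho'(x)$ (the system that both triggers and draws at level $P+x$), which closes the argument. Your proposed routes via overflow-rate comparison or stochastic ordering of stationary kernels would likely succeed, but they are substantially heavier than this one-line coupling through the ``same threshold, extra discard'' system.
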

\begin{proof} Consider the case of $\lambda_e < P$. By accounting for the discarded energy due to battery overflow, the expected total  amounts of transmitted and harvested energy by the $t$-th slot is  related by modifying \eqref{Eq:Total:Energy} as 
\begin{equation}
\sum_{m=1}^t \E[Z_m] = P \sum_{m=1}^t \E[I(S_{m-1} > P)] + \E[S_t] + \sum_{m=1}^t \tilde{D}_m(B) \nn
\end{equation}
where the last term  gives  the expected amount of discarded energy.   Applying Lemma~\ref{Lem:ExcessEnergy:B} and $S_t \leq B$ yields 
\begin{align}
\lambda_e &\leq P \lim_{t\rightarrow\infty}\frac{1}{t}\sum_{m=1}^t \E[I(S_{m-1} > P)] + \lim_{t\rightarrow\infty}\frac{B}{t} + \nn\\
&\qquad  \frac{4}{r^*(P)} e^{-\frac{1}{2}r^*(P)(B - 2P)}\nn\\ 
&\leq P \rho + \frac{4}{r^*(P)} e^{-\frac{1}{2}r^*(P)(B - 2P)}\nn
\end{align}
and the first inequality in \eqref{Eq:TXProb:B:a} follows. The second inequality is proved using Theorem~\ref{Theo:TXProb:InfBattery} and the fact that limiting the battery capacity reduces $\rho$.

Next, consider the case of  $\lambda_e > P$. The definition of $\rho$ in \eqref{Eq:TXProb:Def} can be rewritten as 
\begin{equation}\label{Eq:TXProb:Def:Alt}
\rho = 1-\lim_{n\rightarrow\infty}\frac{1}{n}\sum_{t=1}^n\Pr(S_t < P). 
\end{equation}
The last term can be upper bounded using Lemma~\ref{Lem:BatteryTail:B}, yielding the first inequality in \eqref{Eq:TXProb:B:b}. The second inequality is trivial.

Last, consider the case of  $\lambda_e = P$. Let $\rho'(x)$ denote the transmission probability for the virtual scenario where all transmissions use the  power  $(P + x)$ with $x > 0$. It can be proved similarly as the first inequality in \eqref{Eq:TXProb:B:a} that 
\begin{equation}
\rho'(x) \geq \frac{\lambda_e}{\lambda_e + x}\l[1 - \frac{4e^{-\frac{1}{2}r^*(\lambda_e+x)(B - 2\lambda_e-2x)}}{\lambda_e r^*(\lambda_e+x)} \r] \label{Eq:TXProb:B:d}
\end{equation}
with $x > 0$.
Note that $\rho'(x)$ also gives the transmission probability for 
a virtual transmission strategy that removes  $x$ unit of energy from the battery of a harvester following every instance of transmission.  Since removing energy from the battery reduces the transmission probability, $\rho \geq \rho'(x)$ holds. Combining this inequality  and \eqref{Eq:TXProb:B:d} yields 
\begin{equation}
\rho \geq \frac{\lambda_e}{\lambda_e + x}\l[1 - \frac{4}{\lambda_e r^*(\lambda_e+x)} e^{-\frac{1}{2}r^*(\lambda_e+x)(B - 2\lambda_e-2x)}\r]\nn
\end{equation}
for all $\ x > 0$.
The first inequality in \eqref{Eq:TXProb:B:b} follows. The second inequality is trivial, completing the proof.
\end{proof}

\begin{remark}\emph{
For a sanity check, it can be observed from Proposition~\ref{Prop:TXProb:FiniteBattery} that as $B\rightarrow\infty$,  $\rho$ converges to its counterpart for  the case of infinite battery capacity as stated in Theorem~\ref{Theo:TXProb:InfBattery}.}
\end{remark}

\begin{remark}\emph{
By comparing Propostion~\ref{Prop:TXProb:FiniteBattery} with Theorem~\ref{Theo:TXProb:InfBattery}, it is observed that the degradation  of $\rho$ due to finite battery capacity decreases exponentially with increasing $B$. Hence the effect of finite  battery capacity on $\rho$ is expected to diminish rapidly as $B$ increases,  which is confirmed by simulation results in the sequel.}
\end{remark}

\begin{remark}\emph{The battery-level process for the case of $\lambda_e = P$ is related to  a random walk with a zero drift for which $r^*(\lambda_e)$ is not defined and the   threshold-crossing probability does not have an exponential upper bound \cite{GallagerBook:StochasticProcs:95}. This causes the difficulty  in deriving a lower bound on $\rho$ simpler than that in \eqref{Eq:TXProb:B:c}. Moreover, the maximization of the lower bound in \eqref{Eq:TXProb:B:c} cannot be solved analytically. One should not expect that the bound  is maximized  as $x \rightarrow 0$ because the function $r^*(\lambda_e + x)$ can be a monotone increasing function of $x$.  For instance, given that $Z_t$ follows the exponential distribution with unit mean, it is obtained that 
\begin{equation}
r^*(1+x) = \frac{W_0\l(-(1+x)e^{-(1+x)}\r)}{1+x} + 1 \nn
\end{equation}
where $W_0$ denotes the $0$-th branch of the Lambert W function. The function can be plotted and observed to be monotone increasing for $x \geq 0$.}
\end{remark}
In general, exact analysis of $\rho$ for the case of finite battery capacity is challenging except for some special cases, two of which are discussed as follows. 

\subsubsection{Special case: bounded energy arrivals} Consider the case that  $Z_t$ has bounded support and    $Z_t \in [0, z_{\max}]$. 

\begin{proposition} \label{Eq:Bounded:Energy}\emph{Consider bounded energy arrivals.   If $z_{\max}\leq P$ and the battery capacity $B > 2P$, the probability for battery-overflow is zero and the transmission probability is 
\begin{equation}\label{Eq:TXProb:Bounded}
\rho = \frac{\lambda_e}{P} 
\end{equation}
where $\lambda_e \leq P$.}
\end{proposition}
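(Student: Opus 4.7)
The plan is to reduce this case to the infinite-battery-capacity setting of Theorem~\ref{Theo:TXProb:InfBattery} by showing that, under the stated conditions, the battery can never overflow and so the $\min(\cdot,B)$ in \eqref{Eq:Battery:Evol} is inactive at every time step.

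First I would establish by induction on $t$ that $S_t \leq 2P$ for all $t \geq 0$. The base case is immediate since $S_0 = 0 \leq 2P$. For the inductive step, assume $S_{t-1} \leq 2P$. There are two cases. If $S_{t-1} < P$, no transmission occurs, so by \eqref{Eq:Battery:Evol} the pre-clipped update is $S_{t-1} + Z_t \leq P + z_{\max} \leq 2P$. If $S_{t-1} \geq P$, a transmission occurs, so the pre-clipped update is $S_{t-1} - P + Z_t \leq 2P - P + z_{\max} \leq 2P$. In either case the pre-clipped value is at most $2P$, which is strictly less than $B$ by hypothesis, so the $\min$ with $B$ has no effect and $S_t \leq 2P$. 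This simultaneously proves that battery overflow never occurs and that the finite-capacity dynamics in \eqref{Eq:Battery:Evol} coincide path-by-path with the infinite-capacity dynamics in \eqref{Eq:Battery:Evol:Infinite}.

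Once the two evolutions are identified, the transmission probability is the same as the infinite-capacity one, so Theorem~\ref{Theo:TXProb:InfBattery} gives $\rho = \min(1, \lambda_e/P)$. Because $\lambda_e = \E[Z_t] \leq z_{\max} \leq P$, the minimum is attained by $\lambda_e/P$, yielding \eqref{Eq:TXProb:Bounded}.

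There is no real obstacle here; the only point that requires care is the induction bookkeeping, specifically ensuring the bound $S_t \leq 2P$ holds in both the transmitting and non-transmitting sub-cases and that the strict inequality $2P < B$ is used to discard the $\min$ with $B$. Once that observation is in place, the conclusion is an immediate corollary of Theorem~\ref{Theo:TXProb:InfBattery}.
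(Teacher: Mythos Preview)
Your proposal is correct and follows essentially the same route as the paper: the paper also shows (in slightly terser form) that $S_{t-1}\leq 2P$ implies $S_t\leq 2P$ under $z_{\max}\leq P$ and $B>2P$, concludes that overflow never occurs, and then invokes Theorem~\ref{Theo:TXProb:InfBattery}. Your version is a bit more explicit about the induction and about why $\lambda_e\leq P$, but there is no substantive difference.
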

\begin{proof} By expanding  \eqref{Eq:Battery:Evol}, 
\begin{equation}\label{Eq:Evol:Bounded}
S_t =\l\{ \begin{aligned}
&\min(S_{t-1} + Z_t  - P, B), && S_{t-1} \geq P\\
&\min(S_{t-1} + Z_t, B), && S_{t-1} <  P. 
\end{aligned}\r.
\end{equation}
Given  $B > 2P$ and $Z_t \leq P$, it follows from \eqref{Eq:Evol:Bounded} that    $S_{t-1} \leq 2P$ ensures  $S_{t} \leq 2P$. 
Since $S_0 = 0$, this result implies zero battery-overflow probability. Consequently, the transmission probability is identical to that for the case of  infinite battery capacity in Theorem~\ref{Theo:TXProb:InfBattery}, proving the desired result in \eqref{Eq:TXProb:Bounded}. 
\end{proof}

\subsubsection{Special case: discrete energy arrivals} Assume that $Z_t$ is a discrete random variable and takes on values from $\{0, 1, 2, \cdots\}$, and that $P$ and $B$ are  positive integers with $B \geq P$. Under these assumptions, the distribution of battery levels can be analyzed using Markov-chain theory. Since $S_t$ is independent of $\{S_n\}_{n=1}^{t-2}$ given $S_{t-1}$, $\{S_t\}$ satisfies the Markov property and  is hence a Markov chain. Given battery capacity $B$, the Markov chain $\{S_t\}$  has the state space $\{0, 1, \cdots, B\}$. Define the transition probability $p_{mn}$ as $p_{mn} = \Pr(S_t = n\mid S_{t-1} = m)$. { If $n < B$, the battery level is below its limit and the transition probability is given as 
\begin{equation}
p_{mn}  = \Pr\l( Z_t = n - m + PI(m \geq P)\r), \qquad n < B\label{Eq:TXProb:a}
\end{equation}
where the indicator function $I$ specifies if energy of $P$ units is consumed for transmission in the current slot depending on if $m$ reaches $P$.   If $n = B$, the transmission from state $m$ to $n$ includes all events that the energy arrival in the current slot causes battery saturation, namely $Z_t = \{B - m, B - m +1, \cdots\}$ if $m < P$ and  $Z_t = \{B - m + P, B - m + P+1, \cdots\}$ if $m \geq P$. It follows that 
\begin{equation}
p_{mn}  = \! \sum_{k = B - m}^\infty\! \Pr\l( Z_t = k + PI(m \geq P) \r), \quad n = B. \label{Eq:TXProb:b}
\end{equation}}
Combining \eqref{Eq:TXProb:a} and \eqref{Eq:TXProb:b} yields that 
\begin{equation}
p_{mn}  =\l\{ 
\begin{aligned}
& \Pr\l( Z_t = n - m\r), && m < P, n <  B\\
& \Pr\l( Z_t = n - m + P\r), && m \geq P, n <  B\\
& \sum_{k = B - m}^\infty \Pr\l( Z_t = k\r), && m < P, n =  B\\
& \sum_{k = B - m}^\infty \Pr\l( Z_t = k + P\r), && m \geq P, n =  B. 
\end{aligned}
\r. \label{Eq:TXProb}
\end{equation}
Let $\pi_m$ denote the steady-state probability of state $m$ of the Markov chain $\{S_t\}$. Moreover, 
let $\mathbf{P}$ represent the transition-probability matrix with the $(m, n)$-th element given by $p_{mn}$ and $\boldsymbol{\pi}$ the steady-state-probability row vector with the $m$-th element given by $\pi_m$.   Applying Perron-Frobenius theorem \cite{GallagerBook:StochasticProcs:95}, 
\begin{equation}\label{Eq:Markov:Stationary}
\boldsymbol{\pi}\bP = \boldsymbol{\pi}. 
\end{equation}
The stationary probabilities  $\{\pi_m\}$ can be computed by solving  \eqref{Eq:Markov:Stationary} under  the constraint $\sum_{m=0}^{B} \pi_m=1$. 
Given $\boldsymbol{\pi}$, the transmission probability is obtained as  $\rho = \sum_{m = P}^B \pi_m$. 

\begin{figure}[t]
\begin{center}
\includegraphics[width=8.5cm]{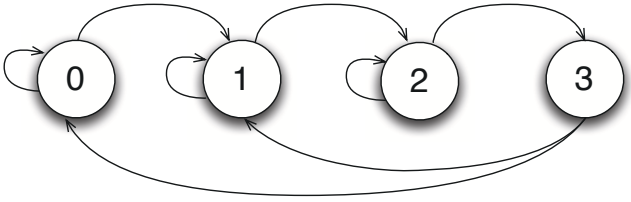}
\caption{A Markov chain modeling the battery level at the typical transmitter for the special case of binary energy arrivals and $P = 3$. }
\label{Fig:Markov}
\end{center}
\end{figure}

The stationary probabilities $\{\pi_m\}$ can be derived in closed form for the simple case of binary energy arrivals, namely  that $Z_t\in \{0, 1\}$.  Then energy-arrival rate is $\lambda_e = \E[Z_t] = \Pr(Z_t = 1)$. The   corresponding transition probabilities are modified  from  \eqref{Eq:TXProb} as 
\begin{equation}\label{Eq:TXProb:Binary}
p_{mn}  =\l\{ 
\begin{aligned}
& 1-\lambda_e,  && (m = P, n = 0) \ \text{or} \  (m <  P,  n = m)\\
& \lambda_e,  && (m = P, n = 1) \ \text{or} \ (m <  P,  n = m+1)\\
& 0, && \text{otherwise}. 
\end{aligned}
\r. \nn
\end{equation}
Based on above  transition  probabilities, the Markov chain $\{S_t\}$ is illustrated in Fig.~\ref{Fig:Markov}. The stationary distribution $\{\pi_m\}$ satisfies the following equations obtained from 
 \eqref{Eq:Markov:Stationary} and the above expression for $p_{mn}$:
 \begin{equation}\begin{aligned}
\pi_0 &= \pi_P(1-\lambda_e)+  \pi_0(1-\lambda_e)  \\
\pi_1 &= \pi_P\lambda_e+  \pi_1(1-\lambda_e) + \pi_0\lambda_e  \\
\pi_m & = \pi_{m-1}\lambda_e +   \pi_m(1-\lambda_e), \qquad m=2, \cdots,  P-1 \\
\pi_P & = \pi_{P-1}\lambda_e. 
\end{aligned}\label{Eq:Markov:Eqs}
\end{equation} 
Solving the equations in \eqref{Eq:Markov:Eqs} and $\sum_{m=0}^P\pi_m  = 1$ gives the following  proposition. 
\begin{proposition}\label{Prop:BinaryArrival}\emph{Consider the case that  energy arrivals are binary ($Z_t \in \{0, 1\}$) and the  transmission power $P$ is a positive integer  no larger than the battery capacity $B$. The distribution of the battery level $S_t$  is 
 \begin{align}
 \pi_0 &= \frac{1-\lambda_e}{P}, \qquad  \pi_{P} = \frac{\lambda_e}{P } \nn\\
\pi_m & = \frac{1}{P}, \qquad m = 1, \cdots, P-1.\nn
\end{align}
The probability for battery overflow is zero  and the transmission probability is $\rho = \pi_P = \lambda_e/P$. 
}
\end{proposition}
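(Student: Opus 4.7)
The plan is to directly solve the balance equations in \eqref{Eq:Markov:Eqs} together with the normalization $\sum_{m=0}^{B}\pi_m=1$, after first arguing that states above $P$ carry zero steady-state mass. Since $S_0=0$, $Z_t\in\{0,1\}$, and a transmission of $P$ units is triggered whenever $S_{t-1}=P$, the chain started at $0$ can only increase in unit steps up to $P$ before being reset. Consequently, with $B\ge P$, the chain is confined to $\{0,1,\dots,P\}$ and the states $\{P+1,\dots,B\}$ are transient (indeed unreachable), so $\pi_m=0$ for $m>P$. This immediately yields zero battery-overflow probability and reduces the problem to a finite system with $P+1$ unknowns.

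The first step I would carry out is to exploit the bulk recurrence $\pi_m=\pi_{m-1}\lambda_e+\pi_m(1-\lambda_e)$ for $m=2,\dots,P-1$, which collapses to $\pi_m=\pi_{m-1}$ and therefore forces $\pi_1=\pi_2=\cdots=\pi_{P-1}$, call this common value $\pi^\star$. Next, the boundary equation $\pi_P=\pi_{P-1}\lambda_e$ gives $\pi_P=\lambda_e\pi^\star$, and the equation for $\pi_0$ rearranges to $\pi_0\lambda_e=\pi_P(1-\lambda_e)$, so $\pi_0=(1-\lambda_e)\pi^\star$. A quick consistency check using the equation for $\pi_1$ confirms that no new constraint is introduced (the system is of rank $P$, as expected for a stochastic matrix).

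Finally, I would impose $\sum_{m=0}^{P}\pi_m=1$ to fix $\pi^\star$. Substituting the expressions above gives $\pi^\star\bigl[(1-\lambda_e)+(P-1)+\lambda_e\bigr]=1$, i.e.\ $\pi^\star=1/P$, from which the stated values $\pi_0=(1-\lambda_e)/P$, $\pi_m=1/P$ for $1\le m\le P-1$, and $\pi_P=\lambda_e/P$ follow. Since $\rho=\sum_{m=P}^{B}\pi_m=\pi_P$, the transmission probability equals $\lambda_e/P$, which also matches Theorem~\ref{Theo:TXProb:InfBattery} as expected.

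I do not anticipate a substantive obstacle here: the argument is essentially linear algebra on a tridiagonal-type system with one loop-back edge from $P$ to $\{0,1\}$. The only points that deserve a moment of care are (i) verifying that one of the $P+1$ balance equations is redundant under normalization, so the solution is unique, and (ii) handling the degenerate endpoints $\lambda_e\in\{0,1\}$: for $\lambda_e=0$ the chain is trivially absorbed at $0$ and for $\lambda_e=1$ it is a deterministic cycle through $\{0,1,\dots,P-1,P\}$, and in both cases the closed-form expressions remain valid by continuity in $\lambda_e$.
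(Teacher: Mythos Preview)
Your proposal is correct and follows essentially the same route as the paper, which simply asserts that the result follows by solving the balance equations \eqref{Eq:Markov:Eqs} together with the normalization $\sum_{m=0}^P\pi_m=1$; you have merely filled in the elementary linear-algebra details and added the useful observation that states above $P$ are unreachable from $S_0=0$, which justifies the zero-overflow claim. One minor caveat: your closing remark that the closed form remains valid at $\lambda_e=0$ ``by continuity'' is not quite right, since for $\lambda_e=0$ every state $m<P$ is absorbing and the chain started at $0$ has $\pi_0=1\neq 1/P$; this degenerate case lies outside the genuinely binary setting and does not affect the main argument.
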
 

\section{Network Throughput }\label{Section:Thput}
In this section, using results on transmission probability as derived in the preceding section,  the network throughput is maximized by optimizing transmission power under the outage-probability constraint and assuming  infinite battery capacity. It is straightforward though tedious to extend the network-throughput  analysis to the case of finite battery capacity using related  results from the last section.\footnote{Essentially, the network throughput  for the case of finite battery capacity can be bounded by modifying  the current analysis   such that the transmission probability $\rho$ is replaced with its bounds as specified in Proposition~\ref{Prop:TXProb:FiniteBattery}.}   Such an extension provides  few new insights and thus is omitted.

\subsection{Maximum Network Throughput}
To characterize the network throughput, transmission power $P$ is related to the active  transmitter density $\lambda_t$ under the outage-probability constraint. To this end, we define a parameter $\mu_\epsilon$, called the \emph{nominal node density},  as the density of a homogeneous PPP $\Lambda(\mu_\epsilon)$ such that 
\begin{equation}\label{Eq:NomDen}
\Pr\l(\sum_{T\in\Lambda(\mu_\epsilon)} |T|^{-\alpha} > 1\r)=\epsilon 
\end{equation}
where the summation can be interpreted as  the interference power measured at a receiver located at the origin from unit-power interferers distributed as $\Lambda(\mu_\epsilon)$. 
Note that $\mu_\epsilon$ depends only on $\epsilon$, { $\alpha$} and the distribution of a PPP  and is independent of  other network parameters. Moreover, 
$\mu_\epsilon$ is  a strictly-monotone-increasing function of $\epsilon$ due to the fact that denser interferers result in larger outage probability for a link. The expression of $\mu_\epsilon$ has no closed form and its value can  be computed by  simulation (see e.g., \cite{WeberKam:CompComplexMANETs:2006}).  The relation between $\mu_{\epsilon}$ and $\epsilon$ is shown in Fig.~\ref{Fig:NormDen}.  The following lemma results  from Mapping Theorem \cite[p18]{Kingman93:PoissonProc}. 

\begin{lemma}\label{Lem:PP:Transform}\emph{Consider the homogeneous PPP $\Lambda(\mu_\epsilon)=\{X\}$. The process $a \Lambda(\mu_\epsilon)=\{aX\}$ with $a > 0$ is a homogeneous PPP with density $\mu_\epsilon/a^2$. 
}
\end{lemma}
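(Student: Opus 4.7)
The plan is to invoke the Mapping Theorem for Poisson point processes directly, as the cited reference suggests. Define the scaling map $f:\mathds{R}^2\to\mathds{R}^2$ by $f(x)=ax$, so that $a\Lambda(\mu_\epsilon)$ is exactly the image $f(\Lambda(\mu_\epsilon))$. Since $f$ is a measurable bijection (in fact a diffeomorphism) whose pushforward of a diffuse intensity measure remains diffuse, the hypotheses of the Mapping Theorem are satisfied. Consequently, $f(\Lambda(\mu_\epsilon))$ is again a Poisson point process, whose intensity measure is the pushforward of the intensity measure $\mu_\epsilon\,dx$ of $\Lambda(\mu_\epsilon)$ under $f$.

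To identify this pushforward explicitly, I would compute, for any bounded Borel set $B\subset\mathds{R}^2$,
\begin{equation}
\mu_\epsilon \cdot |f^{-1}(B)| \;=\; \mu_\epsilon \cdot |a^{-1}B| \;=\; \frac{\mu_\epsilon}{a^2}\,|B|,\nonumber
\end{equation}
where $|\cdot|$ denotes two-dimensional Lebesgue measure and the factor $a^{-2}$ reflects the Jacobian of the scaling in $\mathds{R}^2$. Hence the image intensity measure is a constant multiple of Lebesgue measure with density $\mu_\epsilon/a^2$, which is exactly the definition of a homogeneous PPP of density $\mu_\epsilon/a^2$. There is no real obstacle here: since the lemma is essentially a one-line corollary of the Mapping Theorem, the only care needed is in tracking the Jacobian exponent, ensuring it is $a^2$ (and not $a$ or $a^{-2}$) because the ambient space is two-dimensional and points get spread apart, not concentrated, when $a>1$.
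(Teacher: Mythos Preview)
Your proposal is correct and matches the paper's approach: the paper simply states that the lemma ``results from Mapping Theorem'' in Kingman's book and gives no further argument, so your explicit verification via the pushforward intensity computation is exactly the intended (and only missing) detail.
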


\begin{figure*}
\begin{center}
\includegraphics[width=11cm]{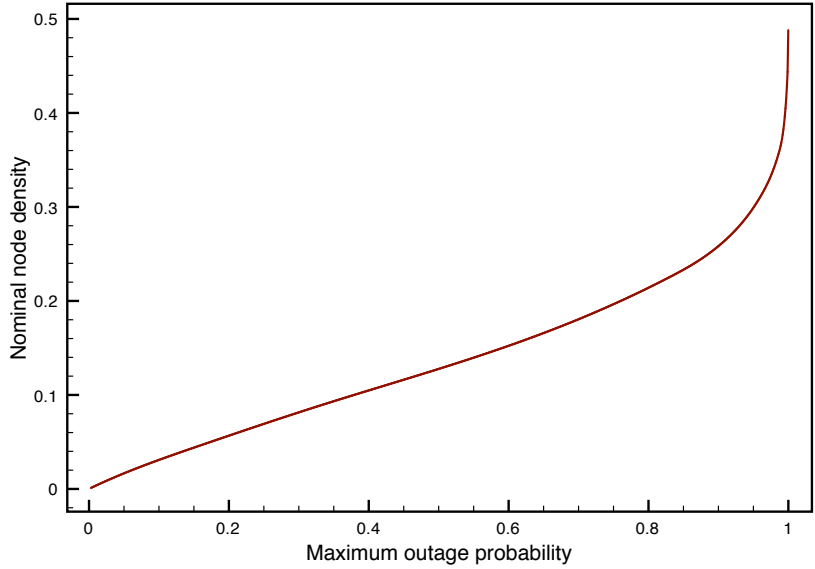}
\caption{The relation between the nominal node density $\mu_{\epsilon}$ and the maximum outage probability $\epsilon$}
\label{Fig:NormDen}
\end{center}
\end{figure*}

Define the admissible  set $\mathcal{F}$ as all combinations of   $(\lambda_t, P)$ that satisfy   the outage-probability  constraint:
\begin{equation}\label{Eq:FeasibleSet:Def}
\mathcal{F} = \{(\lambda_t, P) \in \mathds{R}^+\times \mathds{R}^+\mid \Pout(\lambda_t, P) \leq \epsilon\}. 
\end{equation}
A combination $(\lambda_t, P)$ is \emph{admissible} if it belongs to $\mathcal{F}$. 
{To derive $\mathcal{F}$, since $\Pi$ follows the same distribution as $\Lambda(\lambda_t)$,  $\Pout$ in 
\eqref{Eq:Pout:a} can be rewritten as 
\begin{equation}
\Pout =\Pr\l(\sum_{T\in\Lambda(\lambda_t)}|T|^{-\alpha} > \frac{1}{\theta}-\frac{1}{P} \r).\nn 
\end{equation}
According to  Lemma~\ref{Lem:PP:Transform}, $\Lambda(\lambda_t)$ has the same distribution as $a\Lambda(\mu_{\epsilon})$ with $a = \sqrt{\mu_{\epsilon}/\lambda_t}$. Consequently 
\begin{align}
\Pout &=\Pr\l(\sum_{T\in a \Lambda(\mu_{\epsilon})}|T|^{-\alpha} > \frac{1}{\theta}-\frac{1}{P} \r) \nn\\
& = \Pr\l(\sum_{T\in \Lambda(\mu_{\epsilon})}|aT|^{-\alpha} > \frac{1}{\theta}-\frac{1}{P} \r)\nn\\
& = \Pr\l(\sum_{T\in \Lambda(\mu_{\epsilon})}|T|^{-\alpha} >\l(\frac{\mu_{\epsilon}}{\lambda_t}\r)^{\frac{\alpha}{2}} \l(\frac{1}{\theta}-\frac{1}{P} \r)\r). \label{Eq:Pout:b}
\end{align}
}Combining \eqref{Eq:NomDen}, \eqref{Eq:FeasibleSet:Def} and \eqref{Eq:Pout:b} leads to the following lemma. 

\begin{lemma}\label{Lem:Feasbility}\emph{The admissible  set $\mathcal{F}$ is given as 
\begin{equation}\label{Eq:Feasible:Set}
\mathcal{F} \rightarrow \l\{(\lambda_t, P)\in\mathds{R}^+\times \mathds{R}^+\mid \lambda_t \leq \zeta(P) \r\} 
\end{equation}
where $\zeta(P)$ represents the network interference temperature and is given as
\begin{equation}\label{Eq:IntTemp}
\zeta(P) = \mu_\epsilon \l(\frac{1}{\theta} - \frac{1}{P}\r)^{\frac{2}{\alpha}}, \qquad P \geq \theta. 
\end{equation}
}
\end{lemma}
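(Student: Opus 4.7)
The plan is to read off the lemma from equation~\eqref{Eq:Pout:b}, which already expresses $\Pout$ purely in terms of the nominal PPP $\Lambda(\mu_\epsilon)$, and then convert the outage constraint $\Pout\leq\epsilon$ into an algebraic inequality on $(\lambda_t,P)$ by invoking the defining equation~\eqref{Eq:NomDen} together with monotonicity of the tail distribution of the shot-noise sum.

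First I would observe that the right-hand side threshold in \eqref{Eq:Pout:b},
\begin{equation*}
\tau(\lambda_t,P) \;=\; \left(\frac{\mu_\epsilon}{\lambda_t}\right)^{\alpha/2}\!\left(\frac{1}{\theta}-\frac{1}{P}\right),
\end{equation*}
must be strictly positive for the constraint $\Pout\leq\epsilon$ to be satisfiable, since the shot-noise sum on the left is nonnegative and $\epsilon\ll 1$. Positivity of $\tau$ forces $P > \theta$; the boundary case $P = \theta$ appears only as a closure and gives $\zeta(\theta)=0$, consistent with \eqref{Eq:IntTemp}.

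Second, I would use monotonicity: the function $c\mapsto\Pr\bigl(\sum_{T\in\Lambda(\mu_\epsilon)}|T|^{-\alpha}>c\bigr)$ is strictly decreasing in $c>0$ (the shot-noise random variable has a continuous, strictly increasing CDF on $(0,\infty)$ for a homogeneous PPP with $\alpha>2$, as follows from Campbell's theorem and the absolute continuity of its law). By the defining equation~\eqref{Eq:NomDen}, this tail probability equals $\epsilon$ precisely at $c=1$. Hence $\Pout\leq\epsilon$ is equivalent to $\tau(\lambda_t,P)\geq 1$, i.e.\
\begin{equation*}
\left(\frac{\mu_\epsilon}{\lambda_t}\right)^{\alpha/2}\!\left(\frac{1}{\theta}-\frac{1}{P}\right)\geq 1.
\end{equation*}
Rearranging and raising both sides to the power $2/\alpha$ yields $\lambda_t\leq\mu_\epsilon\bigl(\tfrac{1}{\theta}-\tfrac{1}{P}\bigr)^{2/\alpha}=\zeta(P)$, which is exactly the characterization in \eqref{Eq:Feasible:Set}--\eqref{Eq:IntTemp}.

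The argument is essentially a substitution plus a monotone-inversion step, so there is no real obstacle; the only nontrivial ingredient is justifying strict monotonicity of the shot-noise tail, which is standard. I would package the proof as: (i)~recall \eqref{Eq:Pout:b}; (ii)~note positivity of $\tau$ forces $P\geq\theta$; (iii)~compare with \eqref{Eq:NomDen} via monotonicity to obtain $\tau\geq 1$; (iv)~solve for $\lambda_t$.
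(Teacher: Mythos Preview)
Your proposal is correct and follows essentially the same route as the paper, which simply states that the lemma results from combining \eqref{Eq:NomDen}, \eqref{Eq:FeasibleSet:Def} and \eqref{Eq:Pout:b}; you have merely made explicit the monotonicity step (tail of the shot-noise sum is strictly decreasing in the threshold) that the paper leaves implicit.
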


\begin{remark}\emph{The network interference temperature $\zeta(P)$ specifies the maximum density of interferers a link  can tolerate without violating the outage-probability constraint. This quantity  is analogous to  the \emph{interference temperature} in cognitive-radio  systems that  measures  the maximum amount  of additional  interference  for a certain frequency band without significantly degrading the reliability of  communications therein \cite{Haykin:CognitiveRadio:2005}.
On one hand, the network for large $P$  is interference-limited and further increasing $P$ does not contribute any network-throughput gain. 
Correspondingly, it can be observed from  \eqref{Eq:IntTemp} that $\zeta(P)$ saturates  as $P\rightarrow\infty$:
\begin{equation}\label{Eq:Lim:NIT}
\lim_{P\rightarrow\infty}\zeta(P) = \frac{\mu_{\epsilon}}{\theta^{\frac{2}{\alpha}}}. 
\end{equation}
On the other hand, the network for small $P$ is noise-limited. The value of 
$\zeta(P)$  is not well defined if $P$ is below the target SINR $\theta$, for which the outage constraint cannot be satisfied even in the absence of interference.}
\end{remark}

\begin{remark}\emph{The admissible  set $\mathcal{F}$ is illustrated  as the shaded region in Fig.~\ref{Fig:Feasibility}.  Increasing  $\mu_\epsilon$ (relaxing the outage-probability constraint) enlarges $\mathcal{F}$ and vice versa. Let $\lambda^*$ denote the maximum of the admissible values for $\lambda_t$ under the outage constraint. Then the  boundary of $\mathcal{F}$ corresponds to $\lambda^* = \zeta(P)$.} 
\end{remark}

\begin{remark}\emph{The condition $\lambda_t \leq \zeta(P)$ for the admissible set in \eqref{Eq:Feasible:Set} guarantees that $P$ (equal to the received SNR) is  no smaller   than $\theta$ that is the minimum received SINR required for correct decoding.}
\end{remark}

We are ready to derive the maximum network throughput $R^*$ and the optimal transmission power $P^*$. Let $f$ denote the function that maps $P$ to $\lambda_t$ for fixed $\lambda_0$ and $\lambda_e$, which is obtained from Theorem~\ref{Theo:TXProb:InfBattery} as
\begin{equation}
f(P) = \lambda_0\min\l(1, \frac{\lambda_e}{P}\r). \nn
\end{equation}
The derivation of $R^*$ can be intuitively explained using Fig.~\ref{Fig:Feasibility}.  It can be observed form the  curve depicting the function   $f(P)$ that   $f(P)$ is fixed at $\lambda_0$  for all $P \leq\lambda_e$ and a strictly-monotone-decreasing function for $P > \lambda_e$. Let $P^*$ correspond to the intersection  between the curve $(P, f(P))$ and the admissible  set $\mathcal{F}$.  Then under the outage-probability constraint, $f(P)$ is maximized at $P = P^*$, which corresponds to $R^*$  based on the definition in \eqref{Eq:Thput:Def} since it  is proportional to $\lambda_t$.  Note that the  claim is based on the assumption $\epsilon \ll 1$ or else need not hold (see Remark~\ref{Rem:LargePout}). Depending on if $P^*$ is larger or smaller than $\lambda_e$, $P^*$ and $R^*$ have different expressions as shown in Theorem~\ref{Theo:TPut:Long:NonEMR} that states   the main result of this section. 

\begin{figure*}
\begin{center}
\includegraphics[width=14cm]{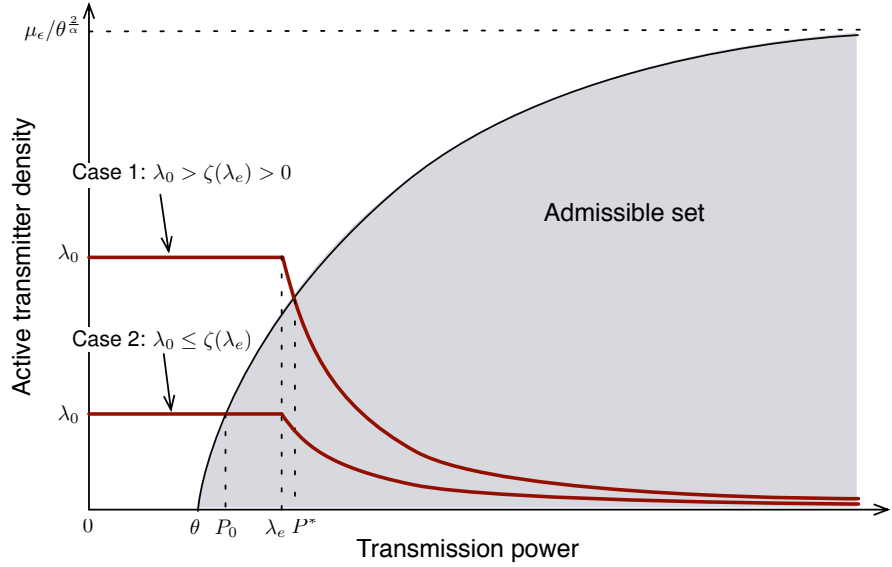}
\caption{The admissible   set $\mathcal{F}$ is sketched  as the shaded region that contains  all combinations of active  transmitter density $\lambda_t$ and transmission power $P$ that satisfy  the outage-probability constraint. Given infinite battery capacity and fixed transmitter density $\lambda_0$,  $f(P)$ is a monotone-decreasing function of $P$ as plotted with the thick lines based on Theorem~\ref{Theo:TXProb:InfBattery} for the cases of $\lambda_0 > \zeta(\lambda_e) > 0$ and $\lambda_0 \leq \zeta(\lambda_e)$, where the intersections are indicated by $P^*$ and $P_0$, respectively. Note that $P^*$ and an arbitrary value in $[P_0, \lambda_e]$ give the optimal  transmission power that maximizes the network throughput for the corresponding  cases.}
\label{Fig:Feasibility}
\end{center}
\end{figure*}

\begin{theorem} \label{Theo:TPut:Long:NonEMR}\emph{Given  infinite battery capacity, the maximum network throughput $R^*$ and  the optimal transmission power $P^*$ are specified  as follows. 
\begin{enumerate}
\item If $\lambda_0 \leq \zeta(\lambda_e)$,  
\begin{equation}\label{Eq:Max:TPut:a}
R^* = \lambda_0\log_2(1+ \theta)
\end{equation}
and $P^*$ is  an arbitrary value in the range $\l[\frac{\theta}{1 -  \theta (\lambda_0/\mu_\epsilon)^{\frac{\alpha}{2}}}, \lambda_e\r]$. 
\item If $\lambda_0 > \zeta(\lambda_e) > 0$,
\begin{equation}\label{Eq:Max:TPut:b}
R^* = \frac{\lambda_0  \lambda_e}{P^*}\log_2(1+\theta)
\end{equation}
where $P^* \geq \lambda_e$ and $\sqrt{P^*}$ solves the following polynomial  equation:
\begin{equation}\label{Eq:Poly:Func}
x^{\alpha} - \theta x^{\alpha -2} - \theta \l(\frac{\lambda_0\lambda_e}{\mu_{\epsilon}}\r)^{\frac{\alpha}{2}}=0. 
\end{equation}
\end{enumerate}}
\end{theorem}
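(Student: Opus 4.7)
My plan is to combine Theorem~\ref{Theo:TXProb:InfBattery} with Lemma~\ref{Lem:Feasbility}, reducing the problem to a one-dimensional optimization in $P$. Writing $R = \lambda_0\rho(P)\log_2(1+\theta)$ with $\rho(P)=\min(1,\lambda_e/P)$, the throughput equals $f(P)\log_2(1+\theta)$ where $f(P)=\lambda_0$ on $(\theta,\lambda_e]$ and $f(P)=\lambda_0\lambda_e/P$ on $[\lambda_e,\infty)$. Admissibility requires $f(P)\le \zeta(P)$ with $\zeta(P)=\mu_\epsilon(1/\theta-1/P)^{2/\alpha}$. The key structural facts I will use are that $f$ is non-increasing in $P$ (constant on $(\theta,\lambda_e]$ and strictly decreasing on $[\lambda_e,\infty)$) while $\zeta$ is strictly increasing on $(\theta,\infty)$, as is evident from \eqref{Eq:IntTemp}.

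For Case~1, since $\rho\le 1$ we always have the pointwise bound $R\le\lambda_0\log_2(1+\theta)$, so it suffices to exhibit an admissible $P$ at which this bound is attained. I would take $P\le\lambda_e$ so that $\rho(P)=1$ and $f(P)=\lambda_0$, and then show admissibility reduces to $\lambda_0\le\zeta(P)$. Inverting $\zeta$ yields the threshold
\begin{equation}
P_0 \;=\; \frac{\theta}{1-\theta(\lambda_0/\mu_\epsilon)^{\alpha/2}},\nonumber
\end{equation}
so any $P\in[P_0,\lambda_e]$ works, provided this interval is non-empty. Non-emptiness is exactly the hypothesis $\lambda_0\le\zeta(\lambda_e)$, after rearranging \eqref{Eq:IntTemp}. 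That gives \eqref{Eq:Max:TPut:a} and the stated range of $P^*$.

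For Case~2, I first rule out $P\le\lambda_e$: on that interval $f(P)=\lambda_0 > \zeta(\lambda_e)\ge \zeta(P)$ by monotonicity of $\zeta$, so no such $P$ is admissible. Hence $P^*>\lambda_e$, where $f(P)=\lambda_0\lambda_e/P$ is strictly decreasing. To maximize $\lambda_0\lambda_e/P$ subject to $\lambda_0\lambda_e/P\le\zeta(P)$, since the LHS decreases and the RHS increases in $P$, the optimum sits at the unique boundary point satisfying $\lambda_0\lambda_e/P=\zeta(P)$; existence and uniqueness follow from continuity together with $f(\lambda_e)>\zeta(\lambda_e)$ and $f(P)\to 0$, $\zeta(P)\to\mu_\epsilon/\theta^{2/\alpha}>0$ as $P\to\infty$. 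Substituting the explicit form of $\zeta$ and writing $x=\sqrt{P^*}$ to clear the fractional exponent on $1/\theta-1/P$ yields \eqref{Eq:Poly:Func}, and $R^*=f(P^*)\log_2(1+\theta)$ gives \eqref{Eq:Max:TPut:b}.

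The only non-routine step is the uniqueness/existence argument at the boundary and verifying that the extremum is a maximum rather than just a stationary point; I would handle this by appealing to the opposite monotonicity of $f$ and $\zeta$ on $[\lambda_e,\infty)$, which forces a single crossing and identifies the admissible region as $[P^*,\infty)$. Everything else is algebraic manipulation of $\zeta(P)=\mu_\epsilon(1/\theta-1/P)^{2/\alpha}$ to reach the polynomial form, and a brief note that under $\epsilon\ll 1$ the factor $(1-\Pout)$ dropped from \eqref{Eq:Thput:Def:a} does not alter the optimizer.
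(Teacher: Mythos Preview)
Your proposal is correct and follows essentially the same route as the paper: both arguments combine Theorem~\ref{Theo:TXProb:InfBattery} with Lemma~\ref{Lem:Feasbility}, use the trivial bound $R\le\lambda_0\log_2(1+\theta)$ together with $P_0=\zeta^{-1}(\lambda_0)$ for Case~1, and in Case~2 rule out $P\le\lambda_e$ via monotonicity of $\zeta$ before identifying $P^*$ as the crossing point $\lambda_0\rho(P)=\zeta(P)$. Your treatment is in fact slightly more thorough than the paper's, since you explicitly argue existence and uniqueness of the crossing via opposite monotonicity and the limiting behavior $f(P)\to 0$, $\zeta(P)\to\mu_\epsilon/\theta^{2/\alpha}$, whereas the paper defers existence of a positive root of \eqref{Eq:Poly:Func} to a subsequent remark.
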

\begin{proof}  First, consider the case of $\lambda_0 \leq \zeta(\lambda_e)$.   The throughput expression  in \eqref{Eq:Thput:Def:a} implies that
\begin{equation}\label{Eq:MaxThput}
R^* \leq \lambda_0 \log(1+\theta). 
\end{equation}
Define $P_0$ such that $\lambda_0 = \zeta(P_0)$ (see Fig.~\ref{Fig:Feasibility}).   Using  the definition of $\zeta$ in \eqref{Eq:IntTemp},  
\[
P_0 = \frac{\theta}{1 -  \theta (\lambda_0/\mu_\epsilon)^{\frac{\alpha}{2}}}. 
\]
It can be  observed from \eqref{Eq:IntTemp} that  $\zeta$ is a strictly-monotone-increasing function. As a result, since $\lambda_0 \leq \zeta(\lambda_e)$ and   $\lambda_0 = \zeta(P_0)$,   $P_0 \leq \lambda_e$  and thus the set $[P_0, \lambda_e]$ (the range of $P^*$ in the theorem statement) is nonempty.  Consider an arbitrary value $p \in [P_0, \lambda_e]$. Since    $\lambda_0 \leq \zeta(p)$ from the monotonicity of  $\xi$,   $(\lambda_t, P) = (\lambda_0, p)$ is admissible  according to  Lemma~\ref{Lem:Feasbility}. Furthermore, $(\lambda_t, P) = (\lambda_0, p)$ is feasible as $\rho(p) = 1$ based on Theorem~\ref{Theo:TXProb:InfBattery} and $p \leq \lambda_e$. It follows that $P=p$  maximizes $R$ by achieving the equality in \eqref{Eq:MaxThput}. This proves the desired result for the case of $\lambda_0 \leq \zeta(\lambda_e)$. 

Next, consider the other case of $\lambda_0 > \zeta(\lambda_e)>0$. Using the throughput expression in \eqref{Eq:Thput:Def:a} and Lemma~\ref{Lem:Feasbility}, the problem of  maximizing  the network throughput is equivalent to 
\begin{equation}\label{Eq:Thput:MaxProb}
\begin{aligned}
&\text{maximize}&&  \rho(P)\\
 &\text{subject to}&& \lambda_0\rho(P) \leq \zeta(P). 
 \end{aligned}
\end{equation}
The inequality $P^* \geq \lambda_e$ in the theorem statement can be proved by contradiction as follows. 
Assume that $P^* < \lambda_e$. This assumption results in  $\rho(P^*) = 1$ by applying Theorem~\ref{Theo:TXProb:InfBattery}. Moreover, $\zeta(P^*) < \zeta(\lambda_e)$ is obtained using  the aforementioned monotonicity of  $\zeta$,   and hence $\lambda_0 > \zeta(P^*)$ given that  $\lambda_0 > \zeta(\lambda_e)$. Combining $\lambda_0 > \zeta(P^*) $ and $\rho(P^*) = 1$ shows that  the assumption of $P^* < \lambda_e$ violates  the  constraint in \eqref{Eq:Thput:MaxProb}, proving that  $P^* \geq \lambda_e$. Then applying Theorem~\ref{Theo:TXProb:InfBattery} yields the desired  result in \eqref{Eq:Max:TPut:b}. Last, since $\rho(P)$ and $\zeta(P)$ are strictly-monotone-decreasing and strictly-monotone-increasing functions, respectively, the solution $P^*$ for the  problem in \eqref{Eq:Thput:MaxProb} must satisfy  $\lambda_0\rho(P^*) = \zeta(P^*)$ or equivalently $\sqrt{P^*}$ solves  the polynomial equation in \eqref{Eq:Poly:Func}, completing the proof. 
\end{proof}

\begin{remark}\emph{For the case of $\lambda_0 \leq \zeta(\lambda_e)$, the network is relatively sparse  and $\lambda_e$ is sufficiently large such that it is optimal as well as feasible for all transmitters to transmit  with probability one, resulting in the network throughput in \eqref{Eq:Max:TPut:a}. For the case of $\lambda_0 > \zeta(\lambda_e)$, the network is relatively dense and  high transmission power is required for satisfying the outage-probability constraint. Consequently, 
not all transmitters can transmit simultaneously,  corresponding to  the optimal transmission probability  smaller than one and the network throughput in \eqref{Eq:Max:TPut:b}.}
\end{remark}

\begin{remark}\emph{With $\alpha > 2$ and the last coefficient at the right-hand  side being negative, the polynomial equation in \eqref{Eq:Poly:Func} has at least one strictly positive solution that gives  $P^*$  for the case of $\lambda_0 > \zeta(\lambda_e)$. For the special case of $\alpha = 4$, the  polynomial equation in \eqref{Eq:Poly:Func} is quadratic and solving it gives $P^*$ in closed form as shown below.}
\end{remark}

\begin{corollary} \emph{Given infinite battery capacity,  $\lambda_0 > \zeta(\lambda_e)$ and $\alpha = 4$, the optimal transmission power $P^*$ is 
\begin{equation}
P^* = \frac{\theta + \sqrt{\theta^2 + 4\theta\l(\frac{\lambda_0\lambda_e}{\mu_{\epsilon}}\r)^{\frac{\alpha}{2}}}}{2}. \nn
\end{equation}
}
\end{corollary}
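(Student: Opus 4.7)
The plan is to specialize the polynomial equation \eqref{Eq:Poly:Func} from Theorem~\ref{Theo:TPut:Long:NonEMR} to the case $\alpha = 4$ and solve it explicitly. Since $\lambda_0 > \zeta(\lambda_e)$, Theorem~\ref{Theo:TPut:Long:NonEMR} applies and guarantees that $\sqrt{P^*}$ is a strictly positive root of
\begin{equation}
x^{\alpha} - \theta x^{\alpha-2} - \theta\left(\frac{\lambda_0\lambda_e}{\mu_\epsilon}\right)^{\alpha/2} = 0. \nonumber
\end{equation}

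First I would substitute $\alpha=4$, giving the quartic $x^4 - \theta x^2 - \theta(\lambda_0\lambda_e/\mu_\epsilon)^2 = 0$. The key observation is that this quartic is biquadratic, i.e.\ involves only even powers of $x$. Under the substitution $y = x^2 = P^*$, it reduces to the quadratic $y^2 - \theta y - \theta(\lambda_0\lambda_e/\mu_\epsilon)^2 = 0$ in $y$. Applying the quadratic formula yields the two roots $y_\pm = \tfrac{1}{2}\bigl(\theta \pm \sqrt{\theta^2 + 4\theta(\lambda_0\lambda_e/\mu_\epsilon)^2}\bigr)$.

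Finally I would invoke the sign constraint. Since the constant term $-\theta(\lambda_0\lambda_e/\mu_\epsilon)^2$ is strictly negative, the two roots $y_\pm$ have opposite signs, and only $y_+$ is positive. Because $P^* = y > 0$ (indeed, Theorem~\ref{Theo:TPut:Long:NonEMR} even gives $P^* \geq \lambda_e > 0$), the physically admissible root is $y_+$, yielding the claimed closed form. There is essentially no obstacle here — the only point requiring any care is discarding the spurious negative root, which is justified by positivity of $P^*$.
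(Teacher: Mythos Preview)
Your proposal is correct and follows exactly the approach the paper indicates: the remark preceding the corollary already notes that for $\alpha=4$ the polynomial equation \eqref{Eq:Poly:Func} becomes quadratic (in $x^2=P$), and your argument simply carries this out explicitly, including the correct selection of the positive root.
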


\begin{remark}\label{Rem:LargePout}\emph{Recall that the throughput maximized in Theorem~\ref{Theo:TPut:Long:NonEMR} is defined in \eqref{Eq:Thput:Def} based on the assumption $\epsilon \ll 1$, where the scaling factor $(1-\Pout)$ (success probability)  is omitted for simplicity since it is close to one under the outage-probability constraint. If this factor is considered, 
changing the value of $P^*$ over the range $\l[\frac{\theta}{1 -  \theta (\lambda_0/\mu_\epsilon)^{\frac{\alpha}{2}}}, \lambda_e\r]$ [see Theorem~\ref{Theo:TPut:Long:NonEMR} for the case of $\lambda_0 \leq \zeta(\lambda_e)$] can lead to a throughput variation  no larger than $\epsilon R^*$, which is negligible given $\epsilon \ll 1$.  However,  if $\epsilon$ is comparable with one or there is no outage constraint ($\epsilon =1$),  the success probability  should be accounted for and the throughput redefined as
\begin{equation}\nn
R= (1-\Pout)\rho \lambda_0   \log_2(1+\theta) 
\end{equation}
where $\Pout \leq \epsilon$. The results in Theorem~\ref{Theo:TPut:Long:NonEMR} can be extended using the redefined  metric by analyzing  $\Pout$ as a function of $P$, which has no closed-form but can be approximated by its bounds \cite{WeberAndrews:TransCapWlssAdHocNetwkSIC:2005}. 
}
\end{remark}

\subsection{Maximum Network Throughput: Extreme Cases}
Consider a network with  a high energy-arrival rate ($\lambda_e\rightarrow\infty$). 
The maximum    network throughput  $R^*$ can be upper bounded as 
\begin{equation}\label{Eq:Thput:UB}
R^* \leq \frac{\mu_{\epsilon}}{\theta^{\frac{2}{\alpha}}}\log_2(1+\theta)
\end{equation}
since the outage-probability constraint requires that (see Lemma~\ref{Lem:Feasbility})
\begin{align}
\lambda_t &\leq \zeta(P)\nn\\
&\leq \frac{\mu_{\epsilon}}{\theta^{\frac{2}{\alpha}}} \label{Eq:NIT:UB}
\end{align}
where \eqref{Eq:NIT:UB} follows from  \eqref{Eq:Lim:NIT} and that $\zeta$ is a monotone-increasing function. 
Combining the two upper bounds on $R^*$ in \eqref{Eq:MaxThput} and \eqref{Eq:Thput:UB} gives 
\begin{equation}\label{Eq:Thput:UB:Combine}
R^* \leq \min\l(\lambda_0, \frac{\mu_{\epsilon}}{\theta^{\frac{2}{\alpha}}}\r)  \log_2(1+\theta). 
\end{equation}
For a high energy-arrival rate, equality is achieved in \eqref{Eq:Thput:UB:Combine} as shown below. 
\begin{proposition}\label{Prop:HighEnergy}\emph{Given infinite battery capacity, as the energy-arrival rate $\lambda_e\rightarrow\infty$, the maximum network throughput  converges as 
\begin{equation}\label{Eq:Thput:HighEnergy}
\lim_{\lambda_e\rightarrow\infty}R^*(\lambda_e) = 
\min\l(\lambda_0, \frac{\mu_{\epsilon}}{\theta^{\frac{2}{\alpha}}}\r)  \log_2(1+\theta). 
\end{equation}
}
\end{proposition}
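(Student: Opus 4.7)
The plan is to show that the established upper bound \eqref{Eq:Thput:UB:Combine} is tight in the limit $\lambda_e\to\infty$, by splitting into cases according to which regime of Theorem~\ref{Theo:TPut:Long:NonEMR} governs the behavior for large $\lambda_e$. Since $\zeta(P) = \mu_{\epsilon}(1/\theta - 1/P)^{2/\alpha}$ is strictly monotone increasing with $\lim_{P\to\infty}\zeta(P) = \mu_{\epsilon}/\theta^{2/\alpha}$, the critical comparison is between $\lambda_0$ and $\mu_{\epsilon}/\theta^{2/\alpha}$.

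First I would handle the case $\lambda_0 < \mu_{\epsilon}/\theta^{2/\alpha}$. By monotonicity of $\zeta$ and its limit, there exists $\lambda_e^0$ such that $\zeta(\lambda_e) > \lambda_0$ for every $\lambda_e > \lambda_e^0$. For such $\lambda_e$, part~1 of Theorem~\ref{Theo:TPut:Long:NonEMR} yields $R^*(\lambda_e) = \lambda_0 \log_2(1+\theta)$, which coincides with the right-hand side of \eqref{Eq:Thput:HighEnergy} in this regime, so the limit is immediate. The borderline $\lambda_0 = \mu_{\epsilon}/\theta^{2/\alpha}$ can be absorbed into the next case by continuity, or handled by the same asymptotic argument below.

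Next I would treat the case $\lambda_0 \geq \mu_{\epsilon}/\theta^{2/\alpha}$. Here $\lambda_0 > \zeta(\lambda_e)$ for all $\lambda_e$ (or all sufficiently large $\lambda_e$), so part~2 of the theorem applies and $R^*(\lambda_e) = (\lambda_0\lambda_e/P^*)\log_2(1+\theta)$, with $\sqrt{P^*}$ solving \eqref{Eq:Poly:Func}. Rewriting the polynomial as
\begin{equation}
(P^*)^{\alpha/2}\!\left(1 - \frac{\theta}{P^*}\right) = \theta\left(\frac{\lambda_0\lambda_e}{\mu_{\epsilon}}\right)^{\!\alpha/2},\nn
\end{equation}
the right-hand side diverges as $\lambda_e\to\infty$, forcing $P^*\to\infty$ and hence $1 - \theta/P^* \to 1$. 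Taking the $2/\alpha$ power and rearranging gives the leading-order behavior $P^*/\lambda_e \to \theta^{2/\alpha}\lambda_0/\mu_{\epsilon}$, so $\lambda_0\lambda_e/P^* \to \mu_{\epsilon}/\theta^{2/\alpha}$. This matches the right-hand side of \eqref{Eq:Thput:HighEnergy} in this regime.

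Combining the two cases and noting the upper bound \eqref{Eq:Thput:UB:Combine} provides the matching inequality in the opposite direction yields \eqref{Eq:Thput:HighEnergy}. The only mildly delicate step is extracting the leading-order behavior from the implicit polynomial equation \eqref{Eq:Poly:Func}; the argument is short once one observes that $P^*$ must diverge so that the $1 - \theta/P^*$ factor vanishes. No subtlety arises from the outage success factor because it is suppressed under the standing assumption $\epsilon\ll 1$ (cf.\ Remark~\ref{Rem:LargePout}).
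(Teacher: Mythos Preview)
Your argument is correct but follows a different route from the paper. The paper does not invoke Theorem~\ref{Theo:TPut:Long:NonEMR} at all; instead, in each case it \emph{constructs} an explicit near-optimal power level---$P=\lambda_e-\delta$ when $\lambda_0\leq\mu_\epsilon\theta^{-2/\alpha}$, and $P=\lambda_0\lambda_e/(\mu_\epsilon\theta^{-2/\alpha}-\delta)$ when $\lambda_0>\mu_\epsilon\theta^{-2/\alpha}$---verifies admissibility via Lemma~\ref{Lem:Feasbility} and $\rho$ via Theorem~\ref{Theo:TXProb:InfBattery}, and then lets $\delta\to 0$ to meet the upper bound \eqref{Eq:Thput:UB:Combine}. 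You instead read off $R^*(\lambda_e)$ from the two branches of Theorem~\ref{Theo:TPut:Long:NonEMR} and, in the second branch, extract the leading-order asymptotics of $P^*$ from the defining polynomial \eqref{Eq:Poly:Func}. Your approach is arguably cleaner once Theorem~\ref{Theo:TPut:Long:NonEMR} is available, and it sidesteps the need to guess a good trial $P$; the paper's approach is more self-contained and avoids the (easy but extra) step of proving $P^*\to\infty$ and isolating the dominant balance. Your placement of the boundary case $\lambda_0=\mu_\epsilon\theta^{-2/\alpha}$ in the second regime is also the right call, since $\zeta(\lambda_e)<\mu_\epsilon\theta^{-2/\alpha}$ strictly for every finite $\lambda_e$.
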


\begin{proof} 
First, consider  the case of $\lambda_0 \leq \mu_{\epsilon}\theta^{-\frac{2}{\alpha}}$. Set   $P = \lambda_e - \delta$ with $\delta > 0$. This results in $\rho =1$ according to Theorem~\ref{Theo:TXProb:InfBattery}. Consequently, $\lambda_0 = \lambda_t$ and hence  $\lambda_t \leq \mu_{\epsilon}\theta^{-\frac{2}{\alpha}}$  from the assumption about $\lambda_0$. Combining this inequality and  \eqref{Eq:Lim:NIT} yields that  $\lambda_t \leq \zeta(P)$ as $P \rightarrow\infty$ along with $\lambda_e\rightarrow\infty$. It follows that as $\lambda_e\rightarrow\infty$, the combination $(\lambda_t, P) = (\lambda_0, \lambda_e-\delta)$ is admissible according to Lemma~\ref{Lem:Feasbility}. This proves  the equality in \eqref{Eq:Thput:UB:Combine} for the current case. 
 
Next, consider the case of  $\lambda_0 >  \mu_{\epsilon}\theta^{-\frac{2}{\alpha}}$. Given this strict inequality,  there exists $\delta > 0$ such that 
\begin{equation}\label{Eq:Density:LB}
\lambda_0 >  \frac{\mu_{\epsilon}}{\theta^{\frac{2}{\alpha}}}-\delta.
\end{equation}  
Set $P$ as 
\begin{equation}\label{Eq:P:Choice}
P =\frac{\lambda_e\lambda_0}{\mu_{\epsilon}\theta^{-\frac{2}{\alpha}} - \delta}. 
\end{equation}
Combining \eqref{Eq:Density:LB} and \eqref{Eq:P:Choice} gives  $ \lambda_e < P$. Consequently, applying Theorem~\ref{Theo:TXProb:InfBattery} gives 
\[
\rho = \frac{\mu_{\epsilon}\theta^{-\frac{2}{\alpha}} - \delta}{\lambda_0}
\]
and hence 
\begin{equation}
\lambda_t =\frac{\mu_{\epsilon}}{\theta^{\frac{2}{\alpha}}} - \delta.  \label{Eq:Den:Exp}
\end{equation}
 As $P\rightarrow\infty$ along with $\lambda_e\rightarrow\infty$, it follows from \eqref{Eq:Lim:NIT} and \eqref{Eq:Den:Exp} that 
there exists $\delta > 0$ such that  $\lambda_t \leq \zeta(P)$. As a result, by applying Lemma~\ref{Lem:Feasbility},  the combination  of $(\lambda_t, P)$ as specified   in  \eqref{Eq:P:Choice}  and\eqref{Eq:Den:Exp} is admissible.  This leads to  
\begin{equation}
\lim_{\lambda_e\rightarrow\infty}R^*(\lambda_e) = \l(\frac{\mu_{\epsilon}}{\theta^{\frac{2}{\alpha}}} - \delta\r)\log_2(1+\theta). \nn
\end{equation}
Letting $\delta \rightarrow 0$ proves   the equality in \eqref{Eq:Thput:HighEnergy} for the current case, completing the proof. 
\end{proof}

\begin{remark}\emph{Given a high energy arrival rate and infinite battery capacity, in the steady state, transmitters always have sufficient energy for transmission. Therefore, the expression in \eqref{Eq:Thput:HighEnergy} also specifies   the maximum network throughput of a MANET with reliable power supplies instead of energy harvesting.}
\end{remark}

\begin{remark}\emph{For the case of $\lambda_0 < \mu_{\epsilon}\theta^{-\frac{2}{\alpha}}$,  the active transmitter density  is below the network-interference temperature even though all transmitters transmit with probability one. Therefore, there is  margin for further increasing active transmitter density   without violating the outage-probability constraint. For this reason, the network-throughput limit  in \eqref{Eq:Thput:HighEnergy} for the current case is proportional to the transmitter density. However, for the case of $\lambda_0 \geq \mu_{\epsilon}/\theta^{\frac{2}{\alpha}}$, active transmitter density reaches the network-interference temperature and cannot be further increased. Consequently, the corresponding network-throughput limit  in \eqref{Eq:Thput:HighEnergy} is independent of  the transmitter density.}
\end{remark} 

\begin{figure*}
\begin{center}
\includegraphics[width=11cm]{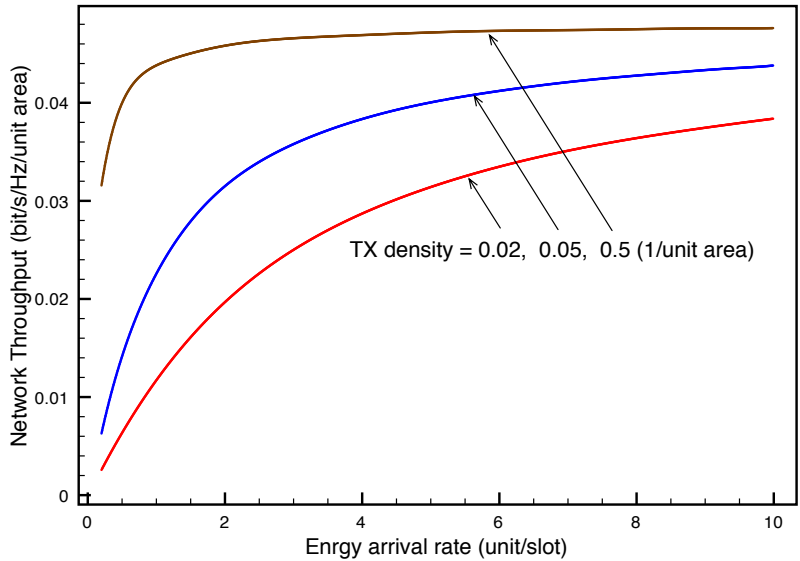}
\caption{Maximum network throughput versus  energy-arrival rate for  optimal  transmission power,  infinite battery capacity, and   the transmitter density  $\lambda_0 = \{0.02, 0.05, 0.5\}$. }
\label{Fig:MaxCap:Var:Energy}
\end{center}
\end{figure*}

Consider  a sparse network ($\lambda_0 \rightarrow 0$). It is optimal for each transmitter to transmit with probability one by setting the transmission power $P^* \in (\theta,  \lambda_e]$ if $\lambda_e > \theta$ or otherwise with probability $\lambda_e/\theta$ by setting $P^*$  to be equal to  $\theta$ (see Theorem~\ref{Theo:TXProb:InfBattery}). The corresponding network throughputs are $R^* = \lambda_0\log_2(1+\theta)$ and 
$R^* = \frac{\lambda_0\lambda_e}{\theta}\log_2(1+\theta)$, respectively, which both diminish as $\lambda_0 \rightarrow 0$.  Next, for a dense network ($\lambda_0 \rightarrow \infty$), the maximum network throughput is specified as follow. 
\begin{proposition}\label{Prop:DenseNet}\emph{Given infinite battery capacity, as the transmitter density   $\lambda_0 \rightarrow\infty$,  the maximum network throughput converges as  
\begin{equation}\label{Eq:Thput:DenseNet}
\lim_{\lambda_0\rightarrow\infty}R^*(\lambda_0) = \frac{\mu_{\epsilon}}{\theta^{\frac{2}{\alpha}}}\log_2(1+\theta). 
\end{equation}
}
\end{proposition}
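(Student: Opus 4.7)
The plan is to bracket $R^\ast(\lambda_0)$ between a matching upper and lower bound as $\lambda_0 \to \infty$, in the same spirit as the proof of Proposition~\ref{Prop:HighEnergy}. The upper bound is immediate from the work already done: inequality \eqref{Eq:Thput:UB:Combine} derived from Lemma~\ref{Lem:Feasbility} and \eqref{Eq:Lim:NIT} gives
\begin{equation}
R^\ast(\lambda_0) \leq \min\!\left(\lambda_0,\ \frac{\mu_\epsilon}{\theta^{2/\alpha}}\right)\log_2(1+\theta) = \frac{\mu_\epsilon}{\theta^{2/\alpha}}\log_2(1+\theta) \nn
\end{equation}
whenever $\lambda_0 \geq \mu_\epsilon/\theta^{2/\alpha}$, and this upper bound does not depend on $\lambda_0$ in that regime, so $\limsup_{\lambda_0\to\infty} R^\ast(\lambda_0)$ is at most the claimed value.

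For the matching lower bound I would exhibit an admissible operating point that pushes $\lambda_t = \rho\lambda_0$ arbitrarily close to $\mu_\epsilon/\theta^{2/\alpha}$. Fix $\delta > 0$ and set
\begin{equation}
P = \frac{\lambda_0 \lambda_e}{\mu_\epsilon \theta^{-2/\alpha} - \delta}. \nn
\end{equation}
For $\lambda_0$ large enough we have $P > \lambda_e$, so by Theorem~\ref{Theo:TXProb:InfBattery} the transmission probability is $\rho = \lambda_e/P$, which yields $\lambda_t = \lambda_0 \rho = \mu_\epsilon/\theta^{2/\alpha} - \delta$. Since $P \to \infty$ as $\lambda_0 \to \infty$, the network interference temperature $\zeta(P)$ defined in \eqref{Eq:IntTemp} converges to $\mu_\epsilon/\theta^{2/\alpha}$ by \eqref{Eq:Lim:NIT}, so for all sufficiently large $\lambda_0$ the condition $\lambda_t \leq \zeta(P)$ holds and the pair $(\lambda_t,P)$ is admissible by Lemma~\ref{Lem:Feasbility}. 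This gives
\begin{equation}
\liminf_{\lambda_0\to\infty} R^\ast(\lambda_0) \geq \left(\frac{\mu_\epsilon}{\theta^{2/\alpha}} - \delta\right)\log_2(1+\theta), \nn
\end{equation}
and letting $\delta \to 0$ closes the sandwich.

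An alternative route, which I would mention as a consistency check, is to invoke Theorem~\ref{Theo:TPut:Long:NonEMR} directly. Since $\zeta(\lambda_e) \leq \mu_\epsilon/\theta^{2/\alpha}$, for large $\lambda_0$ we fall in case (ii), so $R^\ast = (\lambda_0\lambda_e/P^\ast)\log_2(1+\theta)$ with $\sqrt{P^\ast}$ a root of \eqref{Eq:Poly:Func}. As $\lambda_0 \to \infty$ the constant term of \eqref{Eq:Poly:Func} diverges, forcing $P^\ast \to \infty$; the middle term $\theta x^{\alpha-2}$ becomes negligible compared with $x^\alpha$, yielding the asymptotic relation $P^\ast \sim \theta^{2/\alpha}\lambda_0\lambda_e/\mu_\epsilon$ and consequently $\lambda_0\lambda_e/P^\ast \to \mu_\epsilon/\theta^{2/\alpha}$. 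The main obstacle — mild in either approach — is verifying the admissibility condition uniformly as $\lambda_0 \to \infty$; in the first approach this reduces to checking $\zeta(P) \to \mu_\epsilon/\theta^{2/\alpha}$ alongside $\lambda_t \to \mu_\epsilon/\theta^{2/\alpha}-\delta$ with a strict gap $\delta$, which is immediate from \eqref{Eq:Lim:NIT}.
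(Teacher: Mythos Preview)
Your proof is correct and follows essentially the same sandwich strategy as the paper: the upper bound comes from \eqref{Eq:Thput:UB} and the lower bound from exhibiting an explicit admissible operating point with $P$ growing linearly in $\lambda_0$ so that $\lambda_t$ sits just below $\mu_\epsilon/\theta^{2/\alpha}$. The only cosmetic difference is that the paper parametrizes the gap by $1/\log\lambda_0$ (so a single choice $P(\lambda_0)$ works and converges directly) whereas you use a fixed $\delta$ and then let $\delta\to 0$; your version is in fact the cleaner of the two, and your alternative route via the polynomial \eqref{Eq:Poly:Func} is a valid independent check that the paper does not pursue.
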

\begin{proof} Set $P$ as the following function of $\lambda_0$: 
\begin{equation}\label{Eq:P:DenseNet}
P(\lambda_0) =\frac{\lambda_0\lambda_e}{\mu_{\epsilon}}\l(\frac{1}{\theta} - \frac{1}{\log \lambda_0}\r)^{-\frac{2}{\alpha}} 
\end{equation}
that is shown shortly to achieve the limit of $R^*$ in \eqref{Eq:Thput:DenseNet}. Given \eqref{Eq:P:DenseNet},  there exists $\tau_1 > 0$ such that $P(\lambda_0) \geq\lambda_e$ for all  $\lambda_0 > \tau_1$. Therefore,  it follows from \eqref{Eq:Thput:Def:a} and Theorem~\ref{Theo:TXProb:InfBattery} that 
\begin{align}
\lim_{\lambda_0\rightarrow\infty}R(\lambda_0, P(\lambda_0)) & = \lim_{\lambda_0\rightarrow\infty}\frac{\lambda_0\lambda_e}{P(\lambda_0)}\log_2(1+\theta)\nn\\
&= \lim_{\lambda_0\rightarrow\infty}\mu_{\epsilon}\l(\frac{1}{\theta} - \frac{1}{\log \lambda_0}\r)^{\frac{2}{\alpha}}\log_2(1+\theta)\nn\\
&= \frac{\mu_{\epsilon}}{\theta^{\frac{2}{\alpha}}}\log_2(1+\theta).\label{Eq:Thput:DenseNet:a}
\end{align}
Combining \eqref{Eq:Thput:UB} and \eqref{Eq:Thput:DenseNet:a}   shows that the maximum  network throughput has  the limit in \eqref{Eq:Thput:DenseNet} as $\lambda_0$ increases.

The remaining proof verifies that $P(\lambda_0)$ and the corresponding $\lambda_t$ are admissible as $\lambda_0 \rightarrow\infty$. It follows from \eqref{Eq:P:DenseNet} and Theorem~\ref{Theo:TXProb:InfBattery} that for all $\lambda_0 > \tau_1$, $\lambda_t$ is a function of $\lambda_0$ and given as 
\begin{equation}
\lambda_t(\lambda_0) = \mu_{\epsilon}\l(\frac{1}{\theta} - \frac{1}{\log\lambda_0}\r)^{\frac{2}{\alpha}}.  \label{Eq:Density:DenseNet}
\end{equation}
Substituting  \eqref{Eq:P:DenseNet} into \eqref{Eq:IntTemp} yields 
\begin{equation}\label{Eq:NT:Dense}
\zeta(P(\lambda_0)) =  \mu_{\epsilon}\l(\frac{1}{\theta} - \frac{\mu_{\epsilon}}{\lambda_0\lambda_e}\l(\frac{1}{\theta} - \frac{1}{\log\lambda_0}\r)^{\frac{2}{\alpha}}\r)^{\frac{2}{\alpha}}. 
\end{equation}
By comparing \eqref{Eq:Density:DenseNet} and \eqref{Eq:NT:Dense}, there exists $\tau_2 > 0$ such that $\lambda_t(\lambda_0) \leq \zeta(P(\lambda_0))$ for all $\lambda_0 \geq \tau_2$. This proves the admissibility of $P(\lambda_0)$ in \eqref{Eq:P:DenseNet} and $\lambda_t(\lambda_0)$ in \eqref{Eq:Density:DenseNet} as $\lambda_0 \rightarrow\infty$, completing the proof. 
\end{proof}

\begin{remark}\emph{The rate of total energy  harvested per unit area is $\lambda_t P^* = \lambda_0 \lambda_e$ as $\lambda_0\rightarrow\infty$. The linear growth of the rate  with increasing $\lambda_0$ is due to that the  harvester density is equal to  $\lambda_0$. However, more aggressive energy harvesting in a dense network does not continuously increase the network throughput that saturates at high transmitter power  as the network becomes  interference limited (see  Proposition~\ref{Prop:DenseNet}). This issue may be resolved by using an alternative multiple-access protocol such as frequency-hopping multiple access that reduces the density of simultaneous co-channel transmitters.} 
 \end{remark}

\section{Numerical  Results}\label{Section:Simulation}

The nominal node density $\mu_\epsilon$ is fixed as $0.05$ for all  numerical  results, corresponding to the maximum outage probability $\epsilon \approx 0.015$. The relation between $\mu_\epsilon$ and $\epsilon$ is obtained by simulation based on the following procedure (see e.g.,   \cite{WeberKam:CompComplexMANETs:2006}). The summation over the PPP $\Lambda(\mu_\epsilon)$ in \eqref{Eq:NomDen} is approximated by the signal power measured at the origin due to unit-power transmissions by transmitters uniformly distributed in a disk. The number of transmitters follows the Poisson distribution with mean $200$ and  the disk radius is adjusted such that the expected transmitter density is equal to $\mu_\epsilon$.  Based on this setup, the values of  $(\epsilon, \mu_\epsilon)$ are computed using the Monte Carlo method that yields the plot in Fig.~\ref{Fig:NormDen}.  In addition, all numerical  results are based on the SINR threshold $\theta = 3$ and the path-loss exponent $\alpha = 3$.
\begin{figure*}
\begin{center}
\includegraphics[width=11cm]{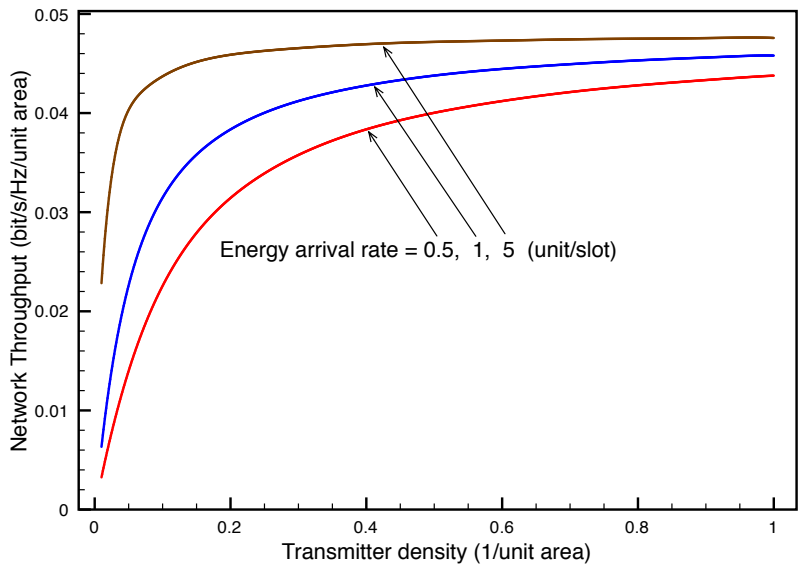}
\caption{Maximum network throughput versus  transmitter density  for  optimal  transmission power,  infinite battery capacity, and the energy-arrival rate $\lambda_e = \{0.5, 1, 5\}$. }
\label{Fig:MaxCap:Vary:Den}
\end{center}
\end{figure*}

\begin{figure*}
\begin{center}
\includegraphics[width=11cm]{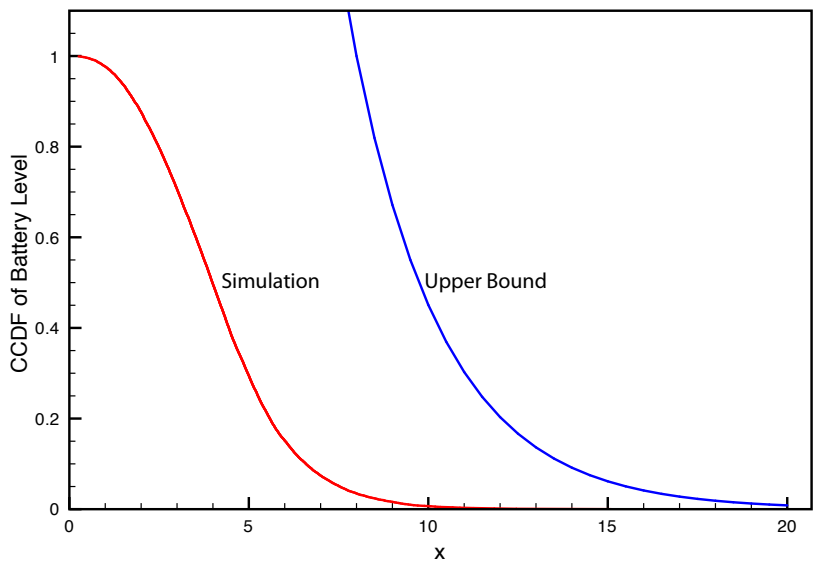}
\caption{A comparison between the average tail probability  of the battery level, $\lim_{n\rightarrow\infty}\frac{1}{n}\sum_{t=1}^n\Pr(S_t> x)$, evaluated  by simulation and its upper bound computed based on Lemma~\ref{Lem:St:Cross} for infinite battery capacity, the DoF of the energy-arrival process $d = 4$, the energy-arrival rate $\lambda_e = 2$ and the transmission power $P = 4$. }
\label{Fig:Energy:CCDF}
\end{center}
\end{figure*}

The distribution of the energy arrival process $\{Z_t\}$ is specified   as follows. 
Let  $\{V_t\}$ denote   an i.i.d. sequence of random variables following the chi-squared distribution with $d\in \{1, 2, \cdots\}$ degrees of freedom (DoF) and mean equal to $d\lambda_e$. Let $\{Z_t\} =  \{\frac{1}{d}V_t\}$ and hence $Z_t$ has mean $\lambda_e$ and variance $2\lambda_e^2/d$. The chosen distribution of $Z_t$ allows its variance (randomness) to be controlled by varying $d$ while the mean of $Z_t$ is fixed. Note that $Z_t$ converges to a constant $\lambda_e$ in probability as $d\rightarrow\infty$ by the law of large numbers.

Infinite battery capacity is assumed for the numerical  results presented in Fig.~\ref{Fig:MaxCap:Var:Energy} and Fig.~\ref{Fig:MaxCap:Vary:Den}. In Fig.~\ref{Fig:MaxCap:Var:Energy}, the  maximum network throughput $R^*$ computed using Theorem~\ref{Theo:TPut:Long:NonEMR}  is plotted against the increasing energy-arrival rate $\lambda_e$  for the transmitter density $\lambda_0 = \{0.02, 0.05, 0.5\}$. It can be observed from Fig.~\ref{Fig:MaxCap:Var:Energy}  that $R^*$ grows as $\lambda_e$ increases and saturates for large $\lambda_e$.  The limits agree with those computed using
Proposition~\ref{Prop:HighEnergy}, namely  $0.04$ bit/s/Hz/unit-area  for 
$\lambda_0 = 0.02$ and $0.048$ bit/s/Hz/unit-area for 
$\lambda_0 = \{0.05, 0.5\}$. In addition, Fig.~\ref{Fig:MaxCap:Var:Energy} shows that in a denser network   (i.e., $\lambda_e = 0.5$), $R^*$  reaches its limit more rapidly as $\lambda_e$ increases.

Fig.~\ref{Fig:MaxCap:Vary:Den} shows the curves of $R^*$ versus  $\lambda_0$ for $\lambda_e = \{0.5, 1, 5\}$, which are obtained using Theorem~\ref{Theo:TPut:Long:NonEMR}.  As $\lambda_0$ increases and regardless of the value of $\lambda_e$,  $R^*$   is observed  to converge to the  limit $0.048$ bit/s/Hz/unit-area predicted by   Proposition~\ref{Prop:DenseNet}.  Moreover, it is observed from Fig.~\ref{Fig:MaxCap:Vary:Den} that  larger $\lambda_e$ results in faster convergence of $R^*$ to its limit as $\lambda_0$ increases.

The average tail probability of the battery level,  $\lim_{n\rightarrow\infty}\frac{1}{n}\sum_{t=1}^n\Pr(S_t> x)$,  is evaluated by simulation and compared in Fig.~\ref{Fig:Energy:CCDF} with its upper bound from Lemma~\ref{Lem:St:Cross} given infinite battery capacity, $d = 4$, $\lambda_e = 2$ and $P = 4$. The bound is observed to be loose  but sufficient for the analysis. The similar observation and remark also hold for the upper bound on the energy-overshoot function $D_t(x)$ as given in Lemma~\ref{Lem:ExcessEnergy} and the numerical results are omitted for brevity. 

\begin{figure*}
\begin{center}
\includegraphics[width=11cm]{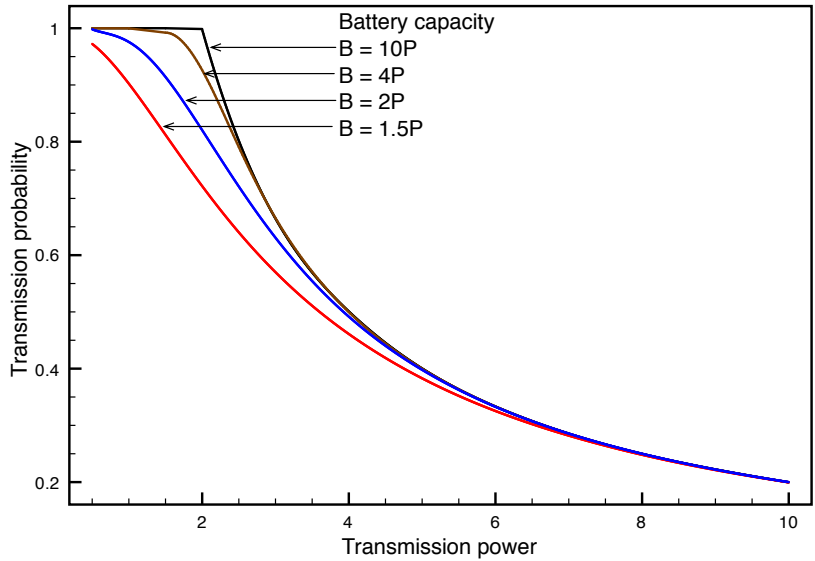}
\caption{Transmission probability versus transmission power for  finite battery capacity $B = \{1.5P, 2P,  4P, 10P\}$,  the DoF of the energy-arrival process $d = 4$, the energy-arrival rate   $\lambda_e = 2$, and the transmitter density $\lambda_0 = 0.02$. }
\label{Fig:TXProb}
\end{center}
\end{figure*}
\begin{figure*}
\begin{center}
\includegraphics[width=11cm]{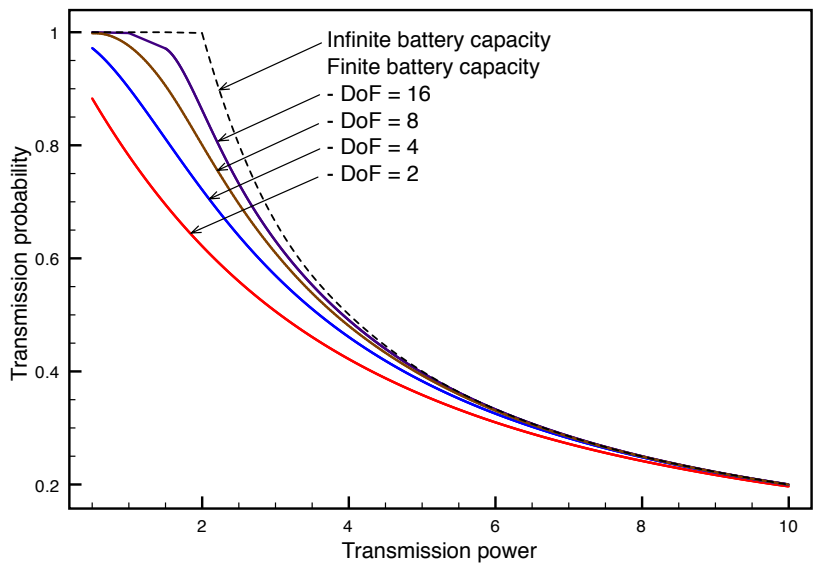}
\caption{Transmission probability versus transmission power for both the cases of finite  ($B = 1.5P$) and infinite battery capacity. The DoF of the energy-arrival process is $d = \{2, 4, 8, 16\}$, the energy-arrival rate $\lambda_e = 2$, and the transmitter density $\lambda_0 = 0.02$. }
\label{Fig:TXProb:VarDoF}
\end{center}
\end{figure*}

Next, consider the case of finite battery capacity. 
In Fig.~\ref{Fig:TXProb}, the transmission probability $\rho$ obtained by simulation is plotted against increasing transmission power $P$ for    finite battery capacity  $B = \{1.5P, 2P, 4P, 10P\}$, $d = 4$, $\lambda_e = 2$, and  $\lambda_0 = 0.02$.   It is found that $B=10P$ is sufficiently large such that the values of $\rho$ closely match those for the case of infinite battery capacity as  computed using Theorem~\ref{Theo:TXProb:InfBattery}. As observed from   Fig.~\ref{Fig:TXProb},   finite battery capacity degrades $\rho$ significantly   only when $P$ and hence $B$   are  relatively small; as $P$ and $B$ increase, $\rho$ rapidly approaches the counterpart for the case of  infinite battery capacity (or that for $B=10P$).

Fig.~\ref{Fig:TXProb:VarDoF} displays the curves of $\rho$ versus   $P$  obtained by simulation for  $B=1.5P$, $d = \{2, 4, 8, 16\}$,  $\lambda_e = 2$ and  $\lambda_0 = 0.02$.   For comparison, the curve for the case of infinite battery capacity is also plotted. As observed from Fig.~\ref{Fig:TXProb:VarDoF}, reducing the randomness of the energy arrival process by increasing $d$  leads to smaller battery-overflow probability and hence higher  $\rho$.  The effect of $d$ on $\rho$ diminishes as $P$  (and hence battery capacity) increases  and $\rho$ converges to its counterpart for the case of infinite battery capacity.

\begin{figure*}
\begin{center}
\includegraphics[width=11cm]{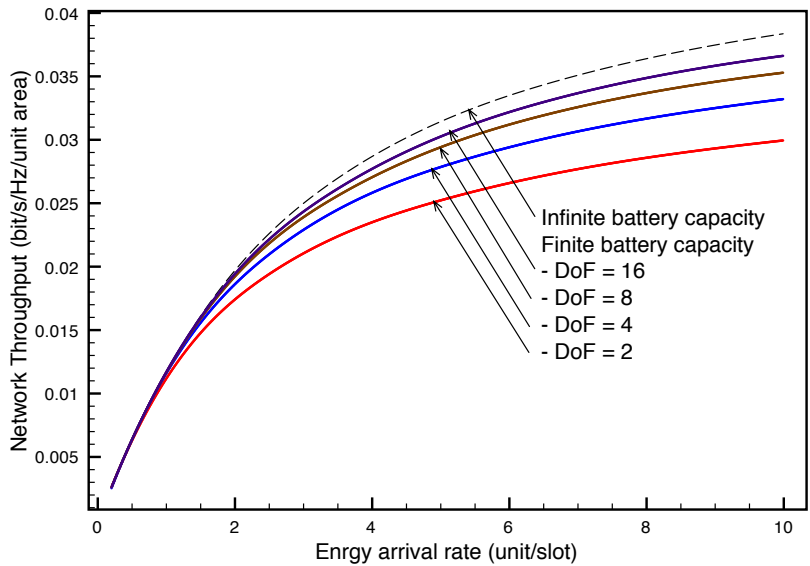}
\caption{Maximum network throughput versus  energy-arrival rate for optimal  transmission power and both the cases of finite  ($B = 1.5P$) and infinite battery capacity. The DoF of the energy-arrival process is $d = \{2, 4, 8, 16\}$ and the transmitter density $\lambda_0 = 0.02$.
}
\label{Fig:Thput:VarDoF}
\end{center}
\end{figure*}

Last, we investigate the effect of the DoF  of the energy arrival process on the network throughput. To this end, Fig.~\ref{Fig:Thput:VarDoF} shows the curves of  $R^*$ versus $\lambda_e$ obtained by simulation for $B = 1.5P$, $d = \{2, 4, 8, 16\}$,  and $\lambda_0 = 0.02$. The curve for the case of infinite battery capacity is also plotted for comparison. It can be observed from Fig.~\ref{Fig:Thput:VarDoF} that  finite battery capacity causes significant  throughput loss especially for large  $\lambda_e$. Such loss is smaller for larger $d$ because of less  randomness in harvested energy and hence smaller battery-overflow probability.

\section{Conclusion}\label{Section:Conclusion}
The energy dynamics in a mobile ad hoc network have  been characterized in terms of transmission probability. Assuming infinite battery capacity, it has been found that the transmission probability is equal to one when the energy-arrival rate exceeds transmission power or otherwise is equal to their ratio. Moreover, for the case of finite battery capacity, bounds on the transmission probability have been obtained and exact expressions have been derived for the special cases of bounded or discrete energy arrivals. 
The results on transmission probability have been applied to derive  the maximum network spatial throughput for a given energy-arrival rate and optimized transmission power.  It has been shown  that it is optimal and feasible for all transmitters to transmit with probability one if the transmitter density is below a threshold that depends on the energy-arrival rate; otherwise, each transmitter should transmit with probability smaller than one.

There are several potential  directions for extending this work. Coexisting wireless networks may harvest electromagnetic (EMR) energy from each others' transmissions. Modeling and designing coexisting networks with EMR energy harvesting  give rise to many new research issues ranging from algorithm design to throughput analysis. The current work focuses on ad hoc networks with random access and can be extended to other types of networks such as cellular networks or other medium-access-control  protocols  such as carrier-sensing multiple access. Last, it is interesting  to investigate the effects of bursty data arrivals and more sophisticated  power control on the throughput of wireless networks powered by energy harvesting.

\section*{Acknowledgement}
The author thanks Rui Zhang for helpful discussion that motivated this research, and the anonymous reviewers whose comments have significantly improved the quality of this paper.

\appendices

\section{Proof of Lemma~\ref{Lem:St:UB}}\label{App:St:UB}
If $t \in \mathcal{T}$, it follows from \eqref{Eq:RandProc2:Def} that $S_t = G'_t$ and hence the inequality in the lemma statement holds since $G_t \geq 0$. Next,  consider the case of $t\notin \mathcal{T}$ and let $t_0\in \mathcal{T}$ denote the time instant  closest to but smaller than $t$.  {  It follows that  the random walk $\{S_t\}$ does not cross the threshold $P$ from below in the time slots $\{t_0+1, \cdots, t\}$. Therefore, if $S_t \geq S_{t_0}$, $S_t  = S_{t_0} + \sum\nolimits_{m=t_0+1}^t \bar{Z}_m$.} Then $S_t$ can be upper bounded as 
\begin{equation}
\begin{aligned}
S_t & \leq S_{t_0} + \max\Big(0, \bar{Z}_t,  \bar{Z}_t + \bar{Z}_{t-1}, \cdots, \\
&\qquad \qquad \sum\nolimits_{m=t_0+1}^t \bar{Z}_m\Big). 
\end{aligned}\label{Eq:St:UB}
\end{equation}
It can be obtained from \eqref{Eq:RandProc1:Def} that 
\begin{equation}
\begin{aligned}
G_t = \max\Big(0, &\bar{Z}_t,  \bar{Z}_t + \bar{Z}_{t-1}, \cdots, \sum\nolimits_{m=t_0+2}^t \bar{Z}_m, \\
&  G_{t_0} +\sum\nolimits_{m=t_0+1}^t \bar{Z}_m\Big).
\end{aligned} \label{Eq:St:UB:a}
\end{equation}
Since $G_{t_0} \geq 0$  and $G'_{t_0} = S_{t_0}$ from \eqref{Eq:RandProc2:Def}, combining \eqref{Eq:St:UB} and \eqref{Eq:St:UB:a} proves the inequality in the lemma statement for the case of $t\notin \mathcal{T}$, completing the proof. \hfill $\blacksquare$

\section{Proof of Lemma~\ref{Lem:St:Cross}}\label{App:St:Cross} 
 Using Lemma~\ref{Lem:St:UB} and for  $0 \leq a   \leq x$, 
\begin{align}
\Pr(S_t  > x) &\leq \Pr(G_t + G'_t > x)  \nn\\
&= \Pr(G_t >  x - G'_t\mid G'_t\geq a)\Pr\l(G'_t\geq a \r) +  \nn\\
&\qquad \qquad \Pr(G_t >  x - G'_t\mid G'_t <  a)\Pr\l(G'_t<  a \r)\nn\\
&\leq \Pr\l(G'_t\geq a \r) +  \Pr(G_t >  x - a). \label{Eq:St:Cross:a}
\end{align}
Let $t_0 \in \mathcal{T}$ specify the slot such that $G'_t = S_{t_0}$. Since $S_{t_0-1} < P$ based on the definition of $\mathcal{T}$, $S_{t_0} = S_{t_0-1} + Z_{t_0}$ using \eqref{Eq:Battery:Evol:Infinite}. It follows from this equality and \eqref{Eq:St:Cross:a} that 
\begin{align}
\Pr(S_t  > x)&\leq \Pr\l(S_{t_0-1} + Z_{t_0} \geq a \r) +  \Pr(G_t >  x - a)\nn\\
&\leq \Pr\l(Z_{t_0} \geq a - P \r) +  \Pr(G_t >  x - a)\nn\\
&= \Pr\l(\bar{Z}_{t_0} \geq a - 2P \r) +  \Pr(G_t >  x - a). \label{Eq:St:Cross:b}
\end{align}
By bounding the first term in \eqref{Eq:St:Cross:b} using Chernoff bound \cite{GallagerBook:StochasticProcs:95} and the second using \eqref{Eq:Gt:Cross}, 
\begin{align}
\Pr(S_t  > x)&\leq  \min_{r\geq 0} \E\l[e^{r\bar{Z}_1}\r]e^{-r(a-2P)}+  e^{-r^*(P)(x-a)}\nn\\
&\leq  e^{-r^*(P)(a-2P)}+  e^{-r^*(P)(x-a)} \label{Eq:St:Cross:c}
\end{align}
where \eqref{Eq:St:Cross:c} results from setting   $r= r^*(P)$. 
By choosing $a$ such that the exponents of the two terms in  \eqref{Eq:St:Cross:c}  are equal, the desired result follows.\hfill $\blacksquare$

\section{Proof of Lemma~\ref{Lem:ExcessEnergy}}
\label{App:ExcessEnergy} From the definition in \eqref{Eq:EnergEx}, 
\begin{align}
D_t(x) & = \int_x^\infty y f_s(y, t) dy - x\Pr(S_t > x)\nn\\
& = \int_x^\infty \Pr(S_t > y) dy\label{Eq:D:Exp:a}\\
& \leq \int_x^\infty 2e^{-\frac{1}{2}r^*(P)(y - 2P)} dy \label{Eq:D:Exp:b}
\end{align}
where \eqref{Eq:D:Exp:a} and \eqref{Eq:D:Exp:b} are obtained using  integration by parts and Lemma~\ref{Lem:St:Cross}, respectively. The desired result follows from \eqref{Eq:D:Exp:b}.\hfill $\blacksquare$ 

\section{Proof of Lemma~\ref{Lem:ExcessEnergy:B}}
 \label{App:ExcessEnergy:B} 
 
The definitions of the random processes  $\{G_t\}$ and $\{G'_t\}$ in \eqref{Eq:RandProc1:Def} and \eqref{Eq:RandProc2:Def} are modified for the case of finite-battery capacity. Specifically, $\{G_t\}$ is redefined as 
 \begin{equation}\label{Eq:RandProc1:Def:Finite}
G_t = \min(\max(G_{t-1} + \bar{Z}_t, 0), B-P)
\end{equation}
and $\{G'\}$ is as given in  \eqref{Eq:RandProc2:Def} but with  the battery-level evolution following  \eqref{Eq:Battery:Evol}.  Given finite battery capacity $B$, the inequality $S_t \leq G'_t + G_t$ can be proved using induction as follows. This inequality holds for $t=0$ since $S_0 = G_0 = 0$ and $G'_0 =P$. Assume that $S_t \leq G'_t + G_t$.  Consider the case of $(t+1) \in \mathcal{T}$. It follows from the definition of $\{G'_t\}$ in \eqref{Eq:RandProc2:Def} that  $S_{t+1}=G'_{t+1}$. Therefore,  $S_{t+1} \leq G'_{t+1} + G_{t+1}$ since $G_{t+1} \geq 0$ from \eqref{Eq:RandProc1:Def:Finite}. 
Next, consider the case of  $(t+1) \notin \mathcal{T}$. Based on the evolution of $\{S_t\}$ in  \eqref{Eq:Battery:Evol}, 
\begin{equation}
S_{t+1}  = \l\{ \begin{aligned}
&\min(S_t + Z_{t+1}, B), && S_t < P \\
&\min(S_t + \bar{Z}_{t+1}, B), && S_t \geq P. 
\end{aligned}\r.\label{Eq:S:UB:B:b}
\end{equation}
Given  that $(t+1) \notin \mathcal{T}$ and $S_t < P$, $S_t + Z_{t+1}\leq P$ based on the definition of $\mathcal{T}$. As a result, 
\begin{equation}
\min(S_t + Z_{t+1}, B)\leq G'_{t+1} + G_{t+1}\label{Eq:S:UB:B:d}
\end{equation}
since $G_{t+1}\geq 0 $ and $G'_{t+1}\geq P$ from  \eqref{Eq:RandProc2:Def}. If $S_t \geq P$, since $S_t \leq G'_t + G_t$, 
\begin{align}
\min(S_t + \bar{Z}_{t+1}, B) &\leq \min(G'_t + G_t + \bar{Z}_{t+1}, B)\nn\\
&\leq G'_t + \min(G_t + \bar{Z}_{t+1}, B-P)\label{Eq:S:UB:B:b:b}\\
&\leq G'_{t+1} + G_{t+1}\label{Eq:S:UB:B:b:c}
\end{align}
where \eqref{Eq:S:UB:B:b:b} applies $G'_t \geq P$, and \eqref{Eq:S:UB:B:b:c} uses \eqref{Eq:RandProc1:Def:Finite} and $G'_{t+1} = G'_t$ given that $(t+1) \notin \mathcal{T}$.  Combining \eqref{Eq:S:UB:B:b}, \eqref{Eq:S:UB:B:d} and \eqref{Eq:S:UB:B:b:c} proves  that $S_{t+1}\leq G'_{t+1} + G_{t+1}$ if $S_t\leq G'_t + G_t$. It follows that 
$S_t\leq G'_t + G_t$ for all $t \geq 0$. Furthermore, it can be shown by expanding \eqref{Eq:RandProc1:Def:Finite} that $\Pr(G_t > x)$ is no larger than that for the case of infinite battery capacity.  Using these results and following the same  procedures as for proving Lemma~\ref{Lem:St:Cross} and \ref{Lem:ExcessEnergy}, it can be shown that \eqref{Eq:St:Cross} also holds for the case of finite battery capacity  and 
\begin{equation}
\tilde{D}_t(x)  \leq \frac{4}{r^*(P)} e^{-\frac{1}{2}r^*(P)(x - 2P)}, \qquad \forall \ t \geq 1. \nn
\end{equation}
The desired result follows by setting $x = B$. \hfill $\blacksquare$

\section{Proof of Lemma~\ref{Lem:BatteryTail:B}}
 \label{App:BatteryTail:B} 
Define the random process $\{Q_t\}$ such that 
\begin{equation}\label{Eq:Q:Def}
Q_t = \min(Q_{t-1} + \bar{Z}_t, B),\qquad t = 1, 2, \cdots
\end{equation}
with $Q_0 = 0$.  Comparing \eqref{Eq:Q:Def} and the evolution  of $S_t$ in \eqref{Eq:Battery:Evol} shows that $S_t \geq Q_t$. Therefore, given $x \in [0, B)$ 
\begin{equation}
\Pr(S_t < x) \leq \Pr(Q_t < x). \label{Eq:St:Tail:B}
\end{equation}
 By expanding \eqref{Eq:Q:Def} 
\begin{equation}
\begin{aligned}
Q_t = \min\Bigg(B, & B+\bar{Z}_t, B + \bar{Z}_t + \bar{Z}_{t-1}, \cdots, \\
& B + \sum_{m=2}^t\bar{Z}_m, \sum_{m=1}^t\bar{Z}_m\Bigg). 
\end{aligned}
\label{Eq:Q:Expand}
\end{equation}
For ease of notation, define 
\begin{equation}
\tilde{Q}_t =  B + \min\l(0, \bar{Z}_t, \bar{Z}_t + \bar{Z}_{t-1}, \cdots, \sum_{m=2}^t\bar{Z}_m\r).\nn
\end{equation}
 Then $Q_t = \min(\tilde{Q}_t, \sum_{m=1}^t\bar{Z}_m)$. It follows that
\begin{align}
\Pr(Q_t < x) &= \Pr\l( \sum_{m=1}^t\bar{Z}_m <  x\mid Q_t =  \sum_{m=1}^t\bar{Z}_m\r)\times \nn\\
&\qquad \Pr\l(Q_t =  \sum_{m=1}^t\bar{Z}_m\r) +\nn\\
&  \qquad \Pr\l(\tilde{Q}_t <  x\mid Q_t = \tilde{Q}_t\r)\Pr\l(Q_t = \tilde{Q}_t\r) \nn\\
&\leq \Pr\l(Q_t =  \sum_{m=1}^t\bar{Z}_m\r) +\nn\\
& \qquad \Pr\l(\tilde{Q}_t <  x\mid Q_t = \tilde{Q}_t\r). \label{Eq:Q:UB}
\end{align}
By inspecting \eqref{Eq:Q:Expand}, the event $Q_t =  \sum_{m=1}^t\bar{Z}_m$ is equivalent to the one $\max(\bar{Z}_1, \bar{Z}_1+\bar{Z}_2, \cdots,  \sum_{m=1}^t\bar{Z}_m) \leq B$.  Then the inequality in \eqref{Eq:Q:UB} can be rewritten as 
\begin{align}
\Pr(Q_t < x)  &\leq \Pr\!\l(\!\max\!\l(\bar{Z}_1, \bar{Z}_1\!+\!Z_2, \cdots,  \sum_{m=1}^t\!\bar{Z}_m\!\r)\! \leq\! B\r) \!+\!\nn\\
\Pr\Bigg(\tilde{Q}_t& <  x\mid \max\!\l(\bar{Z}_1, \bar{Z}_1\!+\!Z_2, \cdots,  \sum_{m=1}^t\!\bar{Z}_m\r) \!>\!  B\Bigg).\nn
\end{align}
Note that removing  the conditioning of the last term increases the  probability. Therefore 
\begin{align}
\Pr(Q_t < x)  &\leq \Pr\!\l(\!\max\!\l(\bar{Z}_1, \bar{Z}_1\!+\!Z_2, \cdots,  \sum_{m=1}^t\!\bar{Z}_m\r) \!\leq\! B\!\r) +\nn\\
& \qquad  \Pr\l(\tilde{Q}_t <  x\r)\nn\\
&\leq \Pr\l(\sum_{m=1}^t\bar{Z}_m \leq B\r) + \Pr\l(\tilde{Q}_t <  x\r). \label{Eq:Q:UB:a}
\end{align}
Applying a similar technique as for proving the result in Theorem~\ref{Theo:TXProb:InfBattery} for the case of $\lambda_e > P$ shows that given $\lambda_e > P$ 
\begin{equation} \label{Eq:LLN:a}
\lim_{n\rightarrow\infty}\frac{1}{n}\sum_{t=1}^n \Pr\l(\sum_{m=1}^t\bar{Z}_m \leq B\r) = 0. 
\end{equation}
Using the definition of $\tilde{Q}_t$, the last term in \eqref{Eq:Q:UB:a} can be  rewritten as 
\begin{align}
\Pr\l(\tilde{Q}_t <  x\r)  &= \Pr\Bigg(\!\min\!\Bigg(\!0, \bar{Z}_t, \bar{Z}_t + \bar{Z}_{t-1}, \cdots, \sum_{m=2}^t\bar{Z}_m\!\Bigg)\leq \nn\\
&\qquad \qquad x - B\Bigg)\nn\\
&=  \Pr\Bigg(\!\min\!\Bigg(\bar{Z}_t, \bar{Z}_t + \bar{Z}_{t-1}, \cdots, \sum_{m=2}^t\bar{Z}_m\!\Bigg) \leq\nn\\
&\qquad \qquad x - B\Bigg)\nn
\end{align}
since $(x - B) \leq  0$.  Applying Kingman bound in a similar way as for obtaining \eqref{Eq:Prob:G} yields 
\begin{equation}\label{Eq:Prob:S:UB}
\Pr\l(\tilde{Q}_t <  x\r) \leq e^{r^*(P)(B-x)}
\end{equation}
where $r^*(P) < 0$ according to Assumption~\ref{AS:Z:Dist} given $\lambda_e > P$. By combining \eqref{Eq:Q:UB:a}, \eqref{Eq:LLN:a} and \eqref{Eq:Prob:S:UB}
\begin{equation}\label{Eq:Q:Lim}
\lim_{n\rightarrow\infty}\frac{1}{n}\sum_{t=1}^n\Pr(Q_t < x) \leq e^{r^*(P)(B-x)}. 
\end{equation}
The desired result follows from \eqref{Eq:St:Tail:B} and  \eqref{Eq:Q:Lim}. \hfill $\blacksquare$

\bibliographystyle{ieeetr}

\begin{IEEEbiographynophoto}{Kaibin Huang}  (S'05, M'08) received the B.Eng. (first-class hons.) and the M.Eng. from the National University of Singapore in 1998 and 2000, respectively, and the Ph.D. degree from The University of Texas at Austin (UT Austin) in 2008, all in electrical engineering.

Since Jul. 2012, he has been an assistant professor in the Dept. of Applied Mathematics at The Hong Kong Polytechnic University (PolyU), Hong Kong. He had held the same position in the School of Electrical and Electronic Engineering at Yonsei University, S. Korea from Mar. 2009 to Jun. 2012 and presently is affiliated with the school as an adjunct professor. From Jun. 2008 to Feb. 2009, he was a Postdoctoral Research Fellow in the Department of Electrical and Computer Engineering at the Hong Kong University of Science and Technology. From Nov. 1999 to Jul. 2004, he was an Associate Scientist at the Institute for Infocomm Research in Singapore. He frequently serves on the technical program committees of major IEEE conferences in wireless communications. He will chair the Comm. Theory Symp. of IEEE GLOBECOM 2014 and has been the technical co-chair for IEEE CTW 2013, the track chair for IEEE Asilomar 2011, and the track co-chair for IEE VTC Spring 2013 and IEEE WCNC 2011. He is an editor for the IEEE Wireless Communications Letters and also the Journal of Communication and Networks. He is an elected member of the SPCOM Technical Committee of the IEEE Signal Processing Society. Dr. Huang received the Outstanding Teaching Award from Yonsei, Motorola Partnerships in Research Grant, the University Continuing Fellowship at UT Austin, and a Best Paper Award from IEEE GLOBECOM 2006. His research interests focus on the analysis and design of wireless networks using stochastic geometry and multi-antenna limited feedback techniques.
\end{IEEEbiographynophoto}
\vfill
\end{document}